\newtheorem{teo}{Theorem}[section]
\newtheorem{prop}[teo]{Proposition}
\theoremstyle{definition}
\newtheorem{rmk}[teo]{Remark }
\newtheorem{exm}[teo]{Example}}
\numberwithin{equation}{section}
\begin{document}

\newcommand{\arXivNumber}{1808.01889}

\renewcommand{\PaperNumber}{013}

\FirstPageHeading

\ShortArticleName{Block-Separation of Variables: a Form of Partial Separation for Natural Hamiltonians}

\ArticleName{Block-Separation of Variables: a Form of Partial\\ Separation for Natural Hamiltonians}

\Author{Claudia Maria CHANU and Giovanni RASTELLI}

\AuthorNameForHeading{C.M.~Chanu and G.~Rastelli}

\Address{Dipartimento di Matematica, Universit\`a di Torino, Torino, via Carlo Alberto 10, Italy}
\Email{\href{mailto:claudiamaria.chanu@unito.it}{claudiamaria.chanu@unito.it}, \href{giovanni.rastelli@unito.it}{giovanni.rastelli@unito.it}}

\ArticleDates{Received August 07, 2018, in final form February 14, 2019; Published online February 23, 2019}

\Abstract{We study twisted products $H=\alpha^rH_r$ of natural autonomous Hamiltonians $H_r$, each one depending on a separate set, called here separate $r$-block, of variables. We show that, when the twist functions $\alpha^r$ are a row of the inverse of a block-St\"ackel matrix, the dynamics of $H$ reduces to the dynamics of the $H_r$, modified by a scalar potential depending only on variables of the corresponding $r$-block.
It is a kind of partial separation of variables. We characterize this block-separation in an invariant way by writing in block-form classical results of St\"ackel separation of variables. We classify the
block-separable coordinates of $\mathbb E^3$.}

\Keywords{St\"ackel systems; partial separation of variables; position-dependent time pa\-ra\-met\-ri\-sation}

\Classification{70H05; 37J15; 70H06}

\section{Introduction}

In \cite{St} Paul St\"ackel started the study of (complete) separation of variables in orthogonal coordinates for the Hamilton--Jacobi equation of natural Hamiltonians with $N$ degrees of freedom. The characterization given by St\"ackel is both coordinate dependent~-- involving $N\times N$~St\"ackel matrices (see Theorem~\ref{teo1} below)~-- and invariant~-- involving $N$ quadratic first integrals of the Hamiltonian. In the following years, the theory was widely developed by Levi-Civita \cite{LC}, Eisenhart \cite{Ei,Ei1} and many others (see \cite{Ka,18bis,MS} for more complete references). We point out that complete separation implies the completeness of the separated integral of the Hamilton--Jacobi equation, because of the existence of $N$ independent constants of motion in involution and, consequently, the Liouville integrability of the Hamiltonian system.

St\"ackel himself considered in \cite{St97} the case of partial separation of variables and obtained a~sufficient characterization of it in terms of St\"ackel matrices of reduced dimension and of a~corresponding number of quadratic first integrals of the Hamiltonian. Partial separation of variables gained much less interest than complete separation. This fact is certainly related with the little use of partial separation in the search of solutions of the Hamilton--Jacobi equation. Moreover, partial separation does not guarantee the existence of complete integrals of the partially sepa\-rated Hamilton--Jacobi equation, neither Liouville integrability. Nevertheless, as pointed out by~\cite{Le}, in this case the Jacobi method of inversion can, sometimes, produce additional first integrals of the Hamiltonian. Recent papers develop partial separation theory for Hamilton--Jacobi and Schr\"odinger equations, improving somehow the results of St\"ackel, by giving a more detailed characterization of the metric coefficients in partially separable coordinates and by providing further conditions for the separation of the quantum systems \cite{Ha, Ma}. Partial separation of Hamilton--Jacobi and Helmholtz equations on four-dimensional manifolds is briefly considered in~\cite{BKM}. A~different approach to non-complete additive separation is represented by non-regular separation which relies on the existence of an additively separated solution on proper submanifolds only~\cite{BCM, C,18bis}.

\looseness=1 In our study, we shift for the first time the interest from the Hamilton--Jacobi equation to the dynamics of the system. We observe that the partial separation introduced by St\"ackel, as well as the complete separation, establishes a dynamical relationship between~$H$ and the (partially) separated equations when these are considered as Hamiltonians on submanifolds of the original phase space. Namely, we find that the projection of the orbits of~$H$ on these submanifolds, spanned by the separated blocks of coordinates, coincides with the orbits of the separated Hamiltonians on the same submanifolds. The only difference is a position-dependent rescaling of the corresponding Hamiltonian parameters. As a consequence, the dynamics of~$H$ can be decomposed into a number of lower-dimensional Hamiltonian systems, allowing in some case a~simpler analysis of the original system. The separated blocks of coordinates, considered together, form a $N$-dimensional coordinate system on the base manifold of~$H$ and in these coordinates the $N$-dimensional metric tensor takes a block-diagonal form. This fact motivates the name we choose for this kind of separation. We prefer to not use the expression {\it partial separation} since it is already associated with Hamilton--Jacobi theory, which we do not consider here.
By shifting the focus from Hamilton--Jacobi theory to the dynamics, we remove the obstruction represented by the completeness of the integral of the Hamilton--Jacobi equation, which is strictly connected with St\"ackel theory of complete separation of variables. Indeed, partial separation does not imply the existence of a complete integral, so that the Jacobi method (the construction of a~canonical transformation to a~trivially integrable Hamiltonian) generally fails. On the contrary, our dynamical interpretation of block separation is basically insensitive to complete or partial separation. We find useful and natural to relate block-separation with the structure of {\it twisted product} that the Hamiltonian assumes when the separation is possible. This allows us to state our results in a form very close to analogous results in classical St\"ackel separation, analogy missing in all the other studies about partial separation. Namely, we can characterize block-separation by introducing ``block'' versions of celebrated Levi-Civita and Eisenhart equations, and of more recent theorems about complete separation.

The main result of block-separation is the splitting of a $N$-dimensional Hamiltonian system into lower-dimensional systems (the blocks). Thus, methods of analysis disposable only for low-dimensional systems become available, such as, for example, the topological classification of integrable Hamiltonian systems \cite{BF}.

In Section~\ref{section2} we recall the basic theorems about St\"ackel complete separation of variables which we rewrite in block-separable form. In Section~\ref{section3} we define twisted products of Hamiltonians and state some relevant properties of them. In Section~\ref{section4} we give a dynamical interpretation of St\"ackel separation, providing examples of the related properties of time-scaling. In Section~\ref{section5} we introduce block-separation and our main results about its characterization, with the block-like formulations of Levi-Civita, Eisenhart and other theorems, and we provide an invariant characterization of block separation. The explicit example we deal with is the four-body Calogero system. In Section~\ref{section6}, we characterize, at least with necessary conditions, all the possible block-separable coordinates of $\mathbb E^3$. Section~\ref{section7}
contains our final considerations and comments.

\section{Outline of St\"ackel separation}\label{section2}

We briefly recall the principal theorems regarding complete separation of the Hamilton--Jacobi equation, see \cite{Ka} and \cite{Be} for further details. The theory of complete additive separation of the Hamilton--Jacobi equation begins with the work of St\"ackel \cite{St,St97} about separability of the Hamilton--Jacobi equation in orthogonal coordinates. The Einstein summation convention on equal indices is understood, unless otherwise stated.

\begin{teo} \label{teo1} In a given orthogonal coordinate system $(q^i)$, the Hamilton--Jacobi equation
\begin{gather}\label{HJ}
H=\frac 12\left(g^{ii}\left(\frac {\partial W}{\partial q^i}\right)^2+V(q)\right)=c_1,
\end{gather}
admits a complete integral in the separated form,
 \begin{gather*}
W=\sum_{i=1}^NW_i\big(q^i,c_1,c_a\big), \qquad a=2,\dots,N, \qquad \det \left(\frac{\partial^2W}{\partial q^i \partial c_j}\right)\neq 0,\qquad c_j=(c_1,c_a),
 \end{gather*}
if and only if
\begin{enumerate}\itemsep=0pt
\item[$1)$] there exists a $N \times N$ matrix $S$ which is invertible and such that each element of its $j$-th row depends on $q^j$ only, such that the $\big(g^{ii}\big)$ are a~row of $S^{-1}$. The matrix $S$ is called St\"ackel matrix.
\item[$2)$] $V$ is a St\"ackel multiplier, i.e., there exist $N$ functions $v_i\big(q^i\big)$ such that
 \begin{gather*}
V=v_i\big(q^i\big)g^{ii}.
 \end{gather*}
\end{enumerate}
As a consequence of $(1)$ and $(2)$, there exist $N-1$ independent quadratic first integrals $(K_a)$ of $H$ $(a=2,\dots,N)$ such that $(c_i)=(c_1,c_a)$ are the constant values of $(H,K_a)$, where $(K_a)=K_2, \dots, K_{N}$.

A complete integral of \eqref{HJ} is then determined by the $N$ separated equations
\begin{gather*}\label{seq}
\left(\frac{{\rm d}W_r}{{\rm d}q^r}\right)^2+v_r=2S^i_rc_i,
\end{gather*}
where all $S^i_r$ depend $(i=1,\dots,N)$ on the coordinate $q^r$ only.
\end{teo}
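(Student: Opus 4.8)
Write $p_i=\partial W/\partial q^i$, so that on a separated ansatz $p_i=\mathrm dW_i/\mathrm dq^i$ is a function of $q^i$ and of the constants $c_j$ alone. The plan is to establish the two implications of the equivalence separately, and then to read off the first integrals as a corollary of the St\"ackel structure.

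For sufficiency I would assume $(1)$ and $(2)$ and simply exhibit the complete integral. Since the $r$-th row of $S$ depends on $q^r$ only, the right-hand side of the proposed separated equation $(\mathrm dW_r/\mathrm dq^r)^2 = 2S^i_r c_i - v_r$ is a function of $q^r$ alone; each such equation can therefore be solved by a single quadrature for $\mathrm dW_r/\mathrm dq^r$, and I set $W=\sum_r W_r$. To verify \eqref{HJ}, substitute $p_i^2 = 2S^j_i c_j - v_i$ into $H$ and contract with $g^{ii}$: using that $(g^{ii})$ is a row of $S^{-1}$, so that $g^{ii}S^j_i=\delta^j_1$ (after relabelling this row as the first), together with the multiplier identity $g^{ii}v_i=V$, the potential cancels and $H$ collapses to $c_1$. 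Completeness is then immediate: differentiating $p_i^2+v_i=2S^j_ic_j$ in $c_j$ gives $\partial^2 W/\partial q^i\partial c_j = S^j_i/p_i$, whose determinant equals $\det(S^j_i)\prod_i p_i^{-1}\neq 0$ by the invertibility of $S$.

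Necessity is the substantial direction and the step I expect to be the main obstacle. Inserting a complete separated integral into \eqref{HJ} yields the identity $g^{ii}(q)\,p_i^2 + V(q) = 2c_1$, in which each $p_i^2$ depends on $q^i$ and on the constants. The core claim to be proved is that this forces $p_i^2$ to be an affine function of the $c_j$ with single-variable coefficients, i.e.\ $p_i^2 = 2S^j_i(q^i)c_j - v_i(q^i)$; once this is in hand, the array $(S^j_i)$ is by construction a St\"ackel matrix, matching the $c$-independent part gives $V=v_ig^{ii}$, and the relations $g^{ii}S^j_i=\delta^j_1$ obtained by differentiating in $c_j$ identify $(g^{ii})$ as a row of $S^{-1}$. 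To establish the affine structure I would differentiate the identity with respect to the coordinates $q^k$ and with respect to the constants, and use the completeness condition $\det(\partial p_i/\partial c_j)\neq 0$ to invert the resulting linear relations; equivalently, one passes through the classical Levi-Civita separability equations, which for a natural orthogonal Hamiltonian reduce to explicit partial differential equations for $g^{ii}$ and $V$ whose integration produces the St\"ackel matrix. The delicate point is precisely this integration: the metric coefficients $g^{ii}$ a priori depend on all the $q^k$, whereas each $p_i$ depends on $q^i$ only, and reconciling these two dependences against the non-degeneracy coming from completeness is where the real work lies.

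Finally, for the stated consequence I would use the St\"ackel matrix to define, for each $a=2,\dots,N$, the quadratic function $K_a=\tfrac12\big(h^{ii}_a p_i^2 + h^{ii}_a v_i\big)$, where $(h^{ii}_a)$ denotes the $a$-th row of $S^{-1}$. Evaluating on a separated solution through $p_i^2=2S^j_ic_j-v_i$ and using $h^{ii}_a S^j_i=\delta^j_a$ gives $K_a=c_a$, so that $(c_1,c_a)$ are the constant values of $(H,K_a)$. That the $K_a$ are genuine first integrals, $\{H,K_a\}=0$, and functionally independent follows from the single-variable dependence of the rows of $S$, which makes the relevant Poisson brackets cancel, and from the invertibility of $S^{-1}$; this is classical, so I would only indicate the cancellation mechanism rather than carry out the full bracket computation.
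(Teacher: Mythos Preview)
The paper does not actually prove this theorem: it is stated in Section~\ref{section2} as classical background material, attributed to St\"ackel and accompanied only by references to the literature (in particular \cite{St,St97,Ka,Be}). There is therefore no proof in the paper to compare your proposal against.

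That said, your outline is a faithful rendition of the classical argument. The sufficiency direction and the completeness check via $\partial^2 W/\partial q^i\partial c_j=S^j_i/p_i$ are correct and standard. For necessity, your plan of differentiating the identity $g^{ii}p_i^2+V=2c_1$ in the constants $c_j$ and invoking the non-degeneracy $\det(\partial p_i/\partial c_j)\neq 0$ is exactly the traditional route; you are right that the substantive step is showing each $p_i^2$ is affine in the $c_j$ with coefficients depending on $q^i$ alone, and your remark that this can equivalently be obtained by integrating the Levi-Civita equations matches how the paper itself frames the subject (it recalls Levi-Civita's equations immediately after stating the theorem). Your construction of the $K_a$ from the remaining rows of $S^{-1}$ is also the standard one, and the involution $\{H,K_a\}=0$ follows from the same row-by-row dependence you cite; the paper later proves the block version of precisely this fact (Proposition~\ref{p4}) by the cancellation mechanism you allude to.
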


We remark that completeness for the integral $W$ of the Hamilton--Jacobi equation means that it depends on $N$ parameters $(c_j)$, constants of motion, such that
\begin{gather*}
\det\left(\frac{\partial W}{\partial q^i\partial c_j}
\right)\neq 0.
 \end{gather*}
Therefore, the $(c_j)$ can be part of a new set of canonical coordinates in which the Hamiltonian flow becomes trivially integrable.

Later, Levi-Civita \cite{LC} obtained necessary and sufficient conditions for the complete separability of a generic Hamiltonian in a general coordinate system. For natural Hamiltonians and orthogonal coordinates, $H=\frac 12 g^{ii}p_i^2+V(q)$, the Levi-Civita equations
split into
\begin{gather}\label{LC}
g^{ii}g^{jj}\partial_{ij}g^{kk}-g^{ii}\partial_ig^{jj} \partial_jg^{kk}-g^{jj}\partial_jg^{ii}\partial_ig^{kk}=0,
\end{gather}
with $i\neq j$ not summed, $i,j,k=1,\dots,N$, for the components of the metric tensor, and
 \begin{gather*}
g^{ii}g^{jj}\partial_{ij}V-g^{ii}\partial_ig^{jj}\partial_jV-g^{jj}\partial_jg^{ii}\partial_iV=0,
 \end{gather*}
with $i\neq j$ not summed, $i,j=1,\dots,N$, called {\em Bertrand--Darboux equations} and whose solution (for $g^{ii}$ satisfying~(\ref{LC})) is $V$ in the form of a St\"ackel multiplier.

In~\cite{Ei}, Eisenhart provided a geometrical characterization of complete separation of variables in orthogonal coordinates introducing Killing tensors and, later, determined all the possible orthogonal separable coordinate systems of~$\mathbb E^3$. He determined eleven types of orthogonal separable coordinate systems, which are described, for example, in~\cite{MS}. The fundamental Eisenhart equations
 \begin{gather*}
\partial_i\lambda_j=(\lambda_i-\lambda_j)\partial_i\ln \big|g^{jj}\big|,\qquad i,j =1, \dots, N
 \end{gather*}
characterize the eigenvalues of a Killing tensor: let $K$ be a 2-tensor with eigenvalues $(\lambda_i)$ and eigenvectors $(\partial_i)$, then $K$ is a Killing tensor if and only if the Eisenhart equations are satisfied. The integrability conditions of the Eisenhart equations for a Killing tensor with simple eigenvalues coincide with the Levi-Civita equations~(\ref{LC}).

The coordinate systems can be geometrically understood as foliations of hypersurfaces called {\it coordinate webs}. The separability of a coordinate web can be characterized by a single {\it characteristic Killing tensor}, i.e., a symmetric Killing 2-tensor with pointwise simple eigenvalues and normally integrable eigenvectors, which determine in each point of the space (up to possible singular sets of zero measure) the basis of coordinate vectors \cite{Be}. We recall that a symmetric {\it Killing $2$-tensor} $K$ is defined by the equivalent equations
 \begin{gather*}[g,K]=0, \qquad \nabla_{(i}K_{jk)}=0, \end{gather*}
where $[\cdot,\cdot]$ is the Schouten bracket and $\nabla$ is the covariant derivative with respect to the metric~$g$.

\begin{teo}\label{t12} The Hamilton--Jacobi equation of a natural Hamiltonian with scalar potential~$V$ is completely separable in an orthogonal coordinate web if and only if there exists a characteristic Killing $2$-tensor $K$ whose eigenvectors are normal to the foliations of the web and such that ${\rm d}(K{\rm d}V)=0$.
\end{teo}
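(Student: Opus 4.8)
The plan is to split the biconditional into its ``metric'' and ``potential'' parts and reduce each to the corresponding ingredient of Theorem~\ref{teo1}. Recall that complete separability in the given orthogonal web is equivalent, by Theorem~\ref{teo1}, to two independent requirements: $(1)$ the metric coefficients $g^{ii}$ form a row of the inverse of a St\"ackel matrix, and $(2)$ the potential $V$ is a St\"ackel multiplier. Condition~$(1)$ is in turn equivalent to the Levi-Civita equations~\eqref{LC} for the $g^{ii}$, and condition~$(2)$ to the Bertrand--Darboux equations for $V$. I would therefore show that the geometric data in the statement match these two conditions exactly: the existence of a characteristic Killing tensor with eigenvectors normal to the web encodes~$(1)$, and the closure condition ${\rm d}(K{\rm d}V)=0$ encodes~$(2)$.

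For the metric part, I would invoke the Eisenhart characterization recalled above: a symmetric $2$-tensor $K$ with eigenvalues $\lambda_i$ and eigenvectors $\partial_i$ is a Killing tensor if and only if its eigenvalues satisfy the Eisenhart equations, and the integrability conditions of the latter for simple eigenvalues coincide with~\eqref{LC}. Hence asking for a \emph{characteristic} Killing tensor (simple eigenvalues, normally integrable eigenvectors) whose eigenvectors are normal to the foliations of the web is precisely asking that the $g^{ii}$ of the web satisfy~\eqref{LC}, i.e., condition~$(1)$. This direction is essentially a restatement of Eisenhart's theory and carries no real difficulty.

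The heart of the argument is to show that ${\rm d}(K{\rm d}V)=0$ is equivalent to the Bertrand--Darboux equations. Working in the coordinates adapted to the web, $K$ is diagonal, so the $1$-form $K{\rm d}V$ has components $(K{\rm d}V)_i=\lambda_i\,\partial_iV$ (no sum). Imposing closure, the $(i,j)$-component with $i\neq j$ reads
\begin{gather*}
\partial_i\lambda_j\,\partial_jV+\lambda_j\,\partial_{ij}V-\partial_j\lambda_i\,\partial_iV-\lambda_i\,\partial_{ij}V=0 .
\end{gather*}
Substituting the Eisenhart equations $\partial_i\lambda_j=(\lambda_i-\lambda_j)\partial_i\ln\big|g^{jj}\big|$ and $\partial_j\lambda_i=(\lambda_j-\lambda_i)\partial_j\ln\big|g^{ii}\big|$, the common factor $(\lambda_i-\lambda_j)$ can be collected and, being nonzero by simplicity of the eigenvalues, divided out. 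After multiplying by $g^{ii}g^{jj}$ this yields exactly
\begin{gather*}
g^{ii}g^{jj}\partial_{ij}V-g^{ii}\partial_ig^{jj}\,\partial_jV-g^{jj}\partial_jg^{ii}\,\partial_iV=0 ,
\end{gather*}
the Bertrand--Darboux equations, whose solution for $g^{ii}$ satisfying~\eqref{LC} is a St\"ackel multiplier, i.e., condition~$(2)$.

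Putting the two parts together gives both implications at once: since every step is an equivalence, the geometric hypotheses hold if and only if~\eqref{LC} and the Bertrand--Darboux equations hold, if and only if by Theorem~\ref{teo1} the Hamilton--Jacobi equation separates in the web. The only points requiring care are that the eigenvalue simplicity is genuinely used (to divide by $\lambda_i-\lambda_j$) and that the normal integrability of the eigenvectors is what guarantees the existence of the web coordinates in which $K$ is diagonal; I expect the index bookkeeping in the closure computation to be the main, though essentially routine, obstacle.
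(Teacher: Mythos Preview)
The paper does not prove Theorem~\ref{t12}; it is stated in Section~\ref{section2} as a recalled classical result, with reference to~\cite{Be}, followed only by the remark that ${\rm d}(K{\rm d}V)=0$ is equivalent to $V$ being a St\"ackel multiplier in the coordinates associated with~$K$. There is therefore no proof in the paper to compare your proposal against.

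That said, your sketch is the standard argument and is fully consistent with the exposition the paper gives: you use exactly the ingredients the paper recalls (Theorem~\ref{teo1}, the Levi-Civita equations~\eqref{LC}, the Eisenhart equations, and the Bertrand--Darboux equations), and your computation showing that ${\rm d}(K{\rm d}V)=0$ reduces, via the Eisenhart equations and division by $\lambda_i-\lambda_j$, to the Bertrand--Darboux equations is correct and is precisely what the paper's remark after the theorem asserts. One small point on the metric side: the fact that the integrability conditions of the Eisenhart equations coincide with~\eqref{LC} guarantees the existence of Killing tensors diagonal in the web coordinates when~\eqref{LC} holds, but the further requirement that one of them has \emph{pointwise simple} eigenvalues (the ``characteristic'' condition) is an additional step of Eisenhart's construction, not a formal consequence of integrability alone; you gloss this as ``essentially a restatement of Eisenhart's theory,'' which is fair, but be aware it is where the actual work hides in a full proof.
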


The last condition is equivalent to say that $V$ is a~St\"ackel multiplier in the orthogonal coordinates associated with $K$. In the formula of Theorem~\ref{t12} $K$ is considered as a linear operator mapping one-forms into one-forms.

Necessary and sufficient conditions for a Killing tensor to be characteristic are given in~\cite{ AMS}.

\begin{teo}[Tonolo--Schouten--Nijenhuis \cite{Nij,Sc,To}] A $2$-tensor $K$ with real distinct eigenvalues has normal eigenvectors if and only if the following conditions
are satisfied
\begin{gather*}
N^l_{[i j} g_{k]l} = 0,\qquad N^l_{[i j} K_{k]l} = 0, \qquad N^l_{[i j} K_{k]m}K^m_l= 0,
\end{gather*}
where $N^i_{jk}$ are the components of the Nijenhuis tensor of $K^i_j$ defined by
 \begin{gather*}
N^i_{jk} = K^i_lK_{[ j,k]} + K^l_{[ j} K^i_{k],l}.
 \end{gather*}
\end{teo}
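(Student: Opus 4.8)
The plan is to pass to the frame of eigenvectors of $K$ and reduce the whole statement to a piece of linear algebra governed by a Vandermonde determinant. Since $K$ is a symmetric $2$-tensor with distinct eigenvalues $\lambda_i$, its eigenvectors $E_i$ form a $g$-orthogonal frame; write $g(E_i,E_i)=h_i\neq 0$ (no sum), so that $g$, $K$ and $K^2$ are simultaneously diagonal in this frame, with $E_i$-entries $h_i$, $\lambda_i h_i$ and $\lambda_i^2 h_i$. Denote the structure functions of the frame by $[E_j,E_k]=c^m_{jk}E_m$. The eigendirection $E_i$ is normal precisely when its orthogonal complement, the distribution spanned by the $E_m$ with $m\neq i$, is integrable; by the Frobenius theorem this happens exactly when the $E_i$-component of $[E_j,E_k]$ vanishes for all $j,k\neq i$. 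Hence \emph{all} eigenvectors are normal if and only if $c^i_{jk}=0$ for every triple of pairwise distinct indices.

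First I would compute the Nijenhuis torsion in this frame from its invariant form $N(E_j,E_k)=[KE_j,KE_k]-K[KE_j,E_k]-K[E_j,KE_k]+K^2[E_j,E_k]$. Inserting $KE_i=\lambda_iE_i$ and collecting terms gives
\begin{gather*}
N(E_j,E_k)=\sum_{m}c^m_{jk}(\lambda_m-\lambda_j)(\lambda_m-\lambda_k)\,E_m+(\lambda_j-\lambda_k)\big((E_k\lambda_j)E_j+(E_j\lambda_k)E_k\big).
\end{gather*}
The decisive observation is that the genuinely ``off-diagonal'' component $N^m_{jk}$ (the coefficient of $E_m$ with $m,j,k$ pairwise distinct) equals $c^m_{jk}(\lambda_m-\lambda_j)(\lambda_m-\lambda_k)$, while the remaining eigenvalue-derivative terms sit only in the $E_j$ and $E_k$ directions.

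Next I would evaluate the three stated conditions in this frame. Because the metric is diagonal, lowering the upper index of $N$ with $g$, $K$, $K^2$ merely multiplies the $E_k$-component by $h_k$, $\lambda_k h_k$, $\lambda_k^2 h_k$, and the total antisymmetrisation over $[ijk]$ discards every contribution carrying a repeated index. Thus, for each fixed triple of pairwise distinct indices, the three conditions become the homogeneous system
\begin{gather*}
A+B+C=0,\qquad \lambda_kA+\lambda_iB+\lambda_jC=0,\qquad \lambda_k^2A+\lambda_i^2B+\lambda_j^2C=0,
\end{gather*}
in the unknowns $A=h_kN^k_{ij}$, $B=h_iN^i_{jk}$, $C=h_jN^j_{ki}$. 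Its coefficient matrix is the Vandermonde matrix on the nodes $\lambda_k,\lambda_i,\lambda_j$, which is invertible \emph{precisely} because the eigenvalues are distinct. Hence $A=B=C=0$; since the $h$'s and the eigenvalue differences are nonzero, every off-diagonal $c^m_{jk}$ vanishes, which by the Frobenius reformulation is exactly normality. The converse is immediate: normality forces all off-diagonal $N^m_{jk}$ to vanish, and these are the only components surviving in the three antisymmetrised conditions, so those hold automatically.

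I expect the main obstacle to be purely computational, namely carrying out the frame expansion of $N(E_j,E_k)$ cleanly and checking that the diagonal, eigenvalue-derivative terms are annihilated by the antisymmetrisation, so that only the Vandermonde system remains. Once the off-diagonal part of $N$ is isolated, the distinctness of the eigenvalues does all the remaining work through the nonvanishing of the Vandermonde determinant.
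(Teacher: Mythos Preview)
Your argument is correct: the eigenframe expansion of the Nijenhuis torsion is right, the observation that antisymmetrisation over $[ijk]$ kills the eigenvalue-derivative terms and retains only the off-diagonal components $N^m_{jk}=c^m_{jk}(\lambda_m-\lambda_j)(\lambda_m-\lambda_k)$ is correct, and the Vandermonde system is exactly what forces $c^m_{jk}=0$ for pairwise distinct indices, which is the Frobenius criterion for normality of each eigendirection. The converse is indeed immediate.

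There is, however, no proof in the paper to compare against. The theorem is stated in Section~2 purely as a background result, attributed to Tonolo, Schouten and Nijenhuis and cited to the original sources; the paper gives no argument of its own. Your approach is essentially the classical one behind those references, specialised to the $g$-symmetric case. One point you should make explicit in a polished version: the step ``its eigenvectors $E_i$ form a $g$-orthogonal frame with $h_i\neq 0$'' uses that $K$ is $g$-symmetric, which the paper takes for granted from the Killing-tensor context but which the theorem statement as quoted does not say outright; without it the diagonality of $g$, $K$, $K^2$ in the eigenframe, and hence the clean reduction to the Vandermonde system, would not be available.
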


An equivalent formulation of Theorem \ref{t12} involves $N$ independent quadratic first integrals, therefore, $N$ Killing 2-tensors, instead of a single characteristic Killing tensor \cite{Be}.

\begin{teo} The natural Hamiltonian $H$ is separable in some orthogonal coordinates $\big(q^{i}\big)$, if and only if
\begin{enumerate}\itemsep=0pt
\item[$1)$] there exist other $N-1$ independent quadratic in the momenta functions $K_a$ such that
 \begin{gather*}
\{H,K_a\}=0,
 \end{gather*}
\item[$2)$] the Killing two-tensors $(k_a)$ associated with $(K_a)$ are simultaneously diagonalized with pointwise independent eigenvalues and have common normally integrable eigevectors.
\end{enumerate}
It follows that $\{K_a,K_b\}=0$.
\end{teo}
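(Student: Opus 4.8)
The plan is to read this theorem as an equivalent reformulation of Theorem~\ref{t12}, passing back and forth between a single characteristic Killing tensor and a family of $N-1$ quadratic first integrals. The technical device on which everything rests is the decomposition of the Poisson bracket of two functions quadratic in the momenta: writing $K=\frac12 k^{ij}p_ip_j+U$, the bracket $\{H,K\}$ splits into a part cubic in the momenta and a part linear in the momenta, and their separate vanishing is equivalent, respectively, to $k$ being a Killing $2$-tensor ($[g,k]=0$) and to the Bertrand--Darboux condition ${\rm d}(k\,{\rm d}V)=0$, in which case $U$ is fixed up to a constant by ${\rm d}U=k\,{\rm d}V$. I would establish this splitting first, since it is the bridge between the two languages.

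For the implication from separability to $(1)$ and $(2)$, I would invoke Theorem~\ref{teo1}: separability provides a St\"ackel matrix $S$ with $g^{ii}$ a row of $S^{-1}$ and with $V$ a St\"ackel multiplier. Taking the other $N-1$ rows of $S^{-1}$ as the diagonal entries of tensors $k_a$, and solving ${\rm d}U_a=k_a\,{\rm d}V$ (possible precisely because $V$ is a multiplier), produces $N-1$ functions $K_a$ with $\{H,K_a\}=0$. Their Killing tensors are diagonal in the separable coordinates, so the coordinate fields $\partial_i$ are common eigenvectors and are normally integrable by construction; the eigenvalues are built from the rows of $S^{-1}$, and their pointwise independence is exactly the invertibility of $S$. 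This yields conditions $(1)$ and $(2)$.

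For the converse I would use the common, normally integrable eigenvectors to recover, through the Frobenius theorem, the orthogonal coordinate web of which they are the coordinate fields. By the splitting above, each $k_a$ is a Killing tensor and satisfies ${\rm d}(k_a\,{\rm d}V)=0$. A generic linear combination $K=c_1 g+\sum_a c_a k_a$ is then a Killing tensor with the same eigenvectors, and by linearity ${\rm d}(K\,{\rm d}V)=0$; the pointwise independence of the eigenvalue tuples forces the eigenvalues of $K$ to be pointwise simple for generic constants, so $K$ is a characteristic Killing tensor. Theorem~\ref{t12} then gives separability. I expect the main obstacle here to be exactly this step: turning ``pointwise independent eigenvalues'' into a single tensor with simple eigenvalues holding uniformly off a singular set of measure zero, and confirming that the generic combination is a genuine characteristic Killing tensor rather than only pointwise diagonalizable.

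Finally, I would deduce the involution $\{K_a,K_b\}=0$ from completeness of the separated integral. Since $W(q,c)$ satisfies $\det(\partial^2W/\partial q^i\partial c_j)\neq0$, it generates a canonical transformation whose new momenta are the constants $c_j=(H,K_a)$; new momenta automatically Poisson-commute, and as functions on phase space they coincide with $H$ and the $K_a$, whence $\{K_a,K_b\}=0$. Alternatively this follows directly from the St\"ackel form of the $K_a$.
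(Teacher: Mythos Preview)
The paper does not actually prove this theorem: it is stated in Section~\ref{section2} as a classical result, introduced as ``an equivalent formulation of Theorem~\ref{t12}'' and attributed to Benenti~\cite{Be}, with no accompanying argument. So there is no in-paper proof to compare your proposal against.

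That said, your outline is sound and matches the way the paper frames the statement. Reading the theorem as a reformulation of Theorem~\ref{t12}, bridging via the cubic/linear splitting of $\{H,K\}$ into the Killing condition $[g,k]=0$ and the Bertrand--Darboux condition ${\rm d}(k\,{\rm d}V)=0$, is exactly the viewpoint the surrounding text in Section~\ref{section2} adopts, and it is the standard route in~\cite{Be}. Your forward direction (rows of $S^{-1}$ as eigenvalues of the $k_a$, St\"ackel-multiplier form of $V$ giving closedness of $k_a\,{\rm d}V$) is the classical construction of the Killing--St\"ackel algebra. Your converse, producing a characteristic Killing tensor as a generic constant combination of $g$ and the $k_a$ and then invoking Theorem~\ref{t12}, is precisely Benenti's argument; the caveat you flag about simplicity of eigenvalues off a measure-zero set is the right one, and in the references this is handled either by a genericity argument on the constants or, equivalently, by working directly with the Eisenhart equations for the family $(\lambda_a^i)$. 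Your derivation of $\{K_a,K_b\}=0$ from the completeness of $W$ is correct, and the alternative you mention (direct computation from the St\"ackel form, as in the proof of Proposition~\ref{p4}) is how the paper itself verifies involution in the block setting.
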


The original formulation of the theorem requires the reality of the eigenvalues, however, this request is unnecessary if one accepts also complex separable coordinates~\cite{DR}.

\section{Twisted products of Hamiltonians}\label{section3}

Let
 \begin{gather*}
M=\times _{r=1}^nM_r,
 \end{gather*}
be the product of $n$ Riemannian or pseudo-Riemannian manifolds $(M_r,g_r)$ of dimension $n_r$, so that dim$(M)=n_1+\dots+n_n=N$, and let $\alpha^r$ be $n$ non zero functions on $M$. The manifold $M$ with metric tensor
 \begin{gather*}
G=\alpha^1 g_1+\cdots+ \alpha^ng_n,
 \end{gather*}
is a Riemannian or pseudo-Riemannian manifold called {\it twisted product manifold} of the $(M_r)$ with twist functions $(\alpha^r)$ \cite{MRS99}. In the case when $\alpha^1=1$ and $\alpha^2,\dots,\alpha^n$ are functions on $M_1$, the manifold $M$ is called {\it warped product}. We extend to functions on $T^*M$ and $T^*M_r$ the concept of twisted and warped products in a natural way. In particular, for each $r$ we consider natural Hamiltonians
 \begin{gather*}
H_r=\frac 12 \big(g_r^{r_ir_j}p_{r_i}p_{r_j}+V_r\big(q^{r_i}\big)\big),
 \end{gather*}
where $\big(q^{r_i},p_{r_i}\big)$, $i=1,\dots,n_r$, are canonical coordinates on $T^*M_r$, we construct {\it twisted product}
 \begin{gather*}
H=\alpha^rH_r=\alpha^1 H_1+ \dots +\alpha^nH_n,
 \end{gather*}
of the $H_r$ with twist functions $\alpha^r\in \mathcal F(M)$.

Then, $H$ is a natural Hamiltonian on $T^*M$ with metric $G$ and potential $V=\alpha^rV_r$.

The manifold $M$ is naturally endowed with block-diagonal coordinates $(q^{r_i})$ such that the components of $G$ are in the form
 \begin{gather*}
G^{r_ir_j}=\alpha^rg^{r_ir_j}_r,\qquad G^{r_is_j}=0, \qquad s\neq r,
 \end{gather*}
we call these coordinates {\it twisted coordinates}.

We have now $n+1$ Hamiltonians, each one with its own Hamiltonian parameter. We call~$t$ the Hamiltonian parameter of $H$ and $\tau_r$ the Hamiltonian parameter of $H_r$. From Hamilton's equations we get
 \begin{gather*}
\frac {{\rm d}q^{r_i}}{{\rm d}t}=\frac{\partial H}{\partial p_{r_i}}=\alpha^r\frac{\partial H_r}{\partial p_{r_i}}=\alpha^r \frac {{\rm d}q^{r_i}}{{\rm d}\tau_r},
 \end{gather*}
and
 \begin{gather*}
\frac {{\rm d}p_{r_i}}{{\rm d}t}=-\frac{\partial H}{\partial q^{r_i}}=-\alpha^r\frac{\partial H_r}{\partial q^{r_i}}-H_s\frac{\partial \alpha^s}{\partial q^{r_i}}=\alpha^r \frac {{\rm d}p_{r_i}}{{\rm d}\tau_r}-H_s\frac{\partial \alpha^s}{\partial q^{r_i}}.
 \end{gather*}
 Therefore, the relation between the Hamiltonian vector fields $X_H$ of $H$ and $X_{r}$ of $H_r$ is
 \begin{gather*}
X_H=\bar X_1+ \dots+\bar X_N-H_s\frac{\partial \alpha^s}{\partial q^{r_i}}\frac{\partial}{\partial p_{r_i}},
 \end{gather*}
where
 \begin{gather*}
\bar X_r=\alpha^rX_r,\qquad r\; \text{not summed},
 \end{gather*}
is the rescaled Hamiltonian vector field of $H_r$.

\section{St\"ackel systems as twisted Hamiltonians}\label{section4}

In this section we study St\"ackel systems in their nature of twisted Hamiltonians. Our aim is to enlighten the relations among the dynamics of the $N$-dimensional Hamiltonian system determined by $H$ and the dynamics of the $N$ one-dimensional Hamiltonians $H_r$, so that $n_1=\dots =n_N=1$, determined by the separated equations of $H$. Separation of variables for the Hamilton--Jacobi equation of $H$ will not be of primary interest in what follows. See Section~\ref{section2} for definitions of St\"ackel matrix, St\"ackel multiplier and separated equations.

Let be $H_r=\frac 12 (p_r^2+V_r)$ and assume that $\alpha^r$ are a row (say the first one) of the inverse of a St\"ackel matrix $S$ for given coordinates $(q^{r})$. Then, the twisted product $H=\alpha^rH_r$ admits separation of variables and we have the separated equations
\begin{gather}\label{se}
H_r=S^i_rc_i,
\end{gather}
where $c_i$ are $N$ constants, corresponding to the $N$ constants of motion $K_i$ of $H=K_1$ and
 \begin{gather*}
K_a=\big(S^{-1}\big)_a^rH_r,\qquad a=2,\dots,N.
 \end{gather*}
The Hamilton--Jacobi complete separated integral $W=W_1\big(q^1,c_i\big)+\dots +W_N\big(q^N,c_i\big)$ is given by integration of
\begin{gather*}\label{hje}
\left(\frac{{\rm d} W_r}{{\rm d}q^r}\right)^2+V_r= 2S_r^ic_i.
\end{gather*}

The Hamilton's equations of $H$, in time $t$, are
\begin{gather}
\dot q^r=\alpha^rp_r,\nonumber\\
 \dot p_r=-\partial_r \alpha^iH_i-\alpha^i\partial_rH_i=-\partial_r\alpha^iS_i^jc_j-\frac 12 \alpha^r \frac {\rm d}{{\rm d}q^r}V_r,\label{he1}
 \end{gather}
where we use the separated equations (\ref{se}) to replace $H_r$ along the integral curves. Since $\alpha^r=\big(S^{-1}\big)^r_1$ is a row of the inverse of $S$, we have
\begin{gather}\label{5bis}
\partial_r\alpha^rS_i^j=\partial_r\big(\alpha^iS_i^j\big)-\alpha^i\partial_rS^j_i=\partial_r\big(\delta^j_1\big)-\alpha^r\frac{{\rm d}}{{\rm d}q^r}S^j_r=-\alpha^r\frac{{\rm d}S^j_r}{{\rm d}q^r},
\end{gather}
and the same for all other elements of the rows of $S^{-1}$.
Then, we can write
\begin{gather}\label{he2}
\dot p_r=\alpha^r\frac{{\rm d}}{{\rm d}q^r}\left(c_jS^j_r-\frac 12 V_r\right).
\end{gather}
Let $\gamma_P$ be the integral curve of $X_H$ containing a point $P\in T^*M$. We consider the values $c_i=K_i(P)$ and we introduce the Hamiltonians
 \begin{gather*}
\tilde H_r=H_r-c_jS^j_r,
 \end{gather*}
with Hamiltonian parameters $\tilde \tau_r$. We can write the equations of Hamilton for $\tilde H_r$ as
\begin{gather}
\frac{{\rm d}}{{\rm d}\tilde \tau_r} q^r = p_r=\partial^r\tilde H_r,\label{tt1}\\
\frac {\rm d}{{\rm d}\tilde \tau_r} p_r = \frac{{\rm d}}{{\rm d}q^r}\left(c_jS^j_r-\frac 12 V_r\right)=-\partial_r \tilde H_r.\label{tt2}
\end{gather}
Therefore,
\begin{prop}\label{p0} For each orbit of $H$, the $N$ Hamiltonian vector fields $X_{\tilde H_r}$ of the $\tilde H_r$ are proportional to the components with respect to $(\partial_r, \partial^r)$ of the Hamiltonian vector field $X_H$ of $H$, where $\alpha^r$ are the proportionality functions.
\end{prop}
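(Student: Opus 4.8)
The plan is to write the Hamiltonian vector field $X_H$ of the twisted product in the canonical frame $(\partial_r,\partial^r)$ of $T^*M$ and to show that, along a fixed orbit $\gamma_P$, its $r$-th pair of components is exactly $\alpha^r$ times the components of the two-dimensional vector field $X_{\tilde H_r}$ on $T^*M_r$. In canonical coordinates any Hamiltonian vector field reads $X_H=\dot q^r\partial_r+\dot p_r\partial^r$, so the statement reduces to comparing the pairs $(\dot q^r,\dot p_r)$ coming from Hamilton's equations for $H$ with the pair $\big(\partial^r\tilde H_r,-\partial_r\tilde H_r\big)$ defining $X_{\tilde H_r}$.

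First I would dispose of the configuration components. Since $H=\alpha^rH_r$ with $H_r=\frac12\big(p_r^2+V_r\big)$ and the metric is block-diagonal in the twisted coordinates, the first group of Hamilton's equations gives $\dot q^r=\partial H/\partial p_r=\alpha^rp_r$, while $X_{\tilde H_r}$ has first component $\partial^r\tilde H_r=p_r$ by \eqref{tt1}; hence the two already differ by the factor $\alpha^r$. The substantive step is the momentum component. Starting from \eqref{he1}, I would use the separated equations \eqref{se}, valid along $\gamma_P$ because the $K_i$ are constants of motion, to replace each $H_i$ by $S^j_ic_j$, and then invoke the St\"ackel identity \eqref{5bis} to collapse the term $\partial_r\alpha^iS^j_ic_j$ into $-\alpha^r\big({\rm d}S^j_r/{\rm d}q^r\big)c_j$. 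This is exactly the manipulation already recorded, yielding the clean form \eqref{he2}, namely $\dot p_r=\alpha^r\frac{{\rm d}}{{\rm d}q^r}\big(c_jS^j_r-\frac12V_r\big)$.

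It then remains only to read off $X_{\tilde H_r}$ from \eqref{tt1} and \eqref{tt2}: its second component is $-\partial_r\tilde H_r=\frac{{\rm d}}{{\rm d}q^r}\big(c_jS^j_r-\frac12V_r\big)$, so that $(\dot q^r,\dot p_r)=\alpha^r\big(\partial^r\tilde H_r,-\partial_r\tilde H_r\big)$ for each $r$, which is the asserted proportionality with proportionality functions $\alpha^r$. The one point requiring care~-- and the only genuine obstacle~-- is that this is an equality along the orbit, not an identity of vector fields on all of $T^*M$: the Hamiltonians $\tilde H_r=H_r-c_jS^j_r$ carry the fixed values $c_i=K_i(P)$, so the argument rests on the constancy of the $H_i$ (equivalently the $K_i$) along $\gamma_P$, which is precisely what legitimises the replacement \eqref{se} inside Hamilton's equations for $H$.
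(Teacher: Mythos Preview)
Your proof is correct and follows essentially the same route as the paper: both identify $(X_H)_r=\dot q^r\partial_r+\dot p_r\partial^r$, use Hamilton's equations together with \eqref{tt1}--\eqref{tt2} to rewrite $(\dot q^r,\dot p_r)=\alpha^r(\partial^r\tilde H_r,-\partial_r\tilde H_r)$, and conclude $(X_H)_r=\alpha^r X_{\tilde H_r}$. The only difference is that you re-derive \eqref{he2} from \eqref{he1} via \eqref{se} and \eqref{5bis} inside the proof, whereas the paper has already established \eqref{he2} in the preceding discussion and simply cites it.
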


\begin{proof}Let be
 \begin{gather*}X_{\tilde H_r}=\frac {\partial \tilde H_r}{\partial p_r}\partial_r-\frac {\partial \tilde H_r}{\partial q^r}\partial^r,
 \end{gather*}
and
 \begin{gather*}(X_H)_r=\dot q^r\partial_r+\dot p_r\partial^r. \end{gather*}

Due to (\ref{tt1}) and (\ref{tt2}) we can write (\ref{he1}) and (\ref{he2}) as
\begin{gather*}
\dot q^r=\alpha^r\partial^r\tilde H_r,\qquad \dot p_r=-\alpha^r\partial_r \tilde H_r,
\end{gather*}
and it follows immediately
 \begin{gather*}(X_H)_r=\alpha^r X_{\tilde H_r}.\tag*{\qed} \end{gather*}\renewcommand{\qed}{}
\end{proof}

\begin{rmk} After Proposition \ref{p0} we can put
\begin{gather*}\label{1a}
\alpha^r=\frac{{\rm d}\tilde \tau_r}{{\rm d}t},
\end{gather*}
and consider the twist functions as determining position-dependent time-scalings between the Hamiltonian parameters $t$ and $\tilde \tau_r$.
\end{rmk}

From Proposition \ref{p0} follows the important result

\begin{prop}\label{p1}
The projection of each orbit of $H$ on each coordinate manifold $\big(q^r,p_r\big)$ coincides with the orbit of $\tilde H_r=H_r-c_jS^j_r$.
\end{prop}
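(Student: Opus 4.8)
The plan is to derive the statement directly from Proposition~\ref{p0}, using the elementary fact that rescaling a vector field by a nowhere-vanishing scalar alters only the parametrisation of its integral curves, not their images (the orbits). The key preliminary observation is that $\tilde H_r = H_r - c_j S^j_r$ depends on the single pair $(q^r,p_r)$ alone: indeed $H_r = \frac12(p_r^2 + V_r(q^r))$, and each $S^j_r$ is a function of $q^r$ only. Hence $X_{\tilde H_r}$ is a genuine vector field on the two-dimensional coordinate manifold $(q^r,p_r)$, whose integral curves are by definition the orbits of $\tilde H_r$.

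First I would fix the integral curve $\gamma_P$ of $X_H$ through $P$ and read off from Proposition~\ref{p0} that its projection onto the plane $(q^r,p_r)$ obeys
\begin{gather*}
\frac{{\rm d}}{{\rm d}t}(q^r,p_r) = \alpha^r\, X_{\tilde H_r}(q^r,p_r).
\end{gather*}
Evaluated along $\gamma_P$, the twist function $\alpha^r$ is simply a function of the parameter $t$; although it is assembled from all the coordinates, in the projected equation it enters merely as a scalar prefactor multiplying the planar field $X_{\tilde H_r}$. This is the point worth emphasising: a priori the projected motion looks non-autonomous on the plane, yet the full dependence on the other blocks is confined to this single scalar factor.

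Next I would introduce the rescaled parameter $\tilde\tau_r$ via ${\rm d}\tilde\tau_r = \alpha^r\,{\rm d}t$, exactly as in the Remark following Proposition~\ref{p0}. Since the $\alpha^r$ are non-zero and continuous, they keep a constant sign along the connected orbit, so $t\mapsto\tilde\tau_r$ is a strictly monotone, hence invertible, change of parameter. Applying the chain rule, in the new parameter one obtains
\begin{gather*}
\frac{{\rm d}}{{\rm d}\tilde\tau_r}(q^r,p_r) = \frac{1}{\alpha^r}\,\alpha^r X_{\tilde H_r} = X_{\tilde H_r},
\end{gather*}
so the projected curve, when reparametrised by $\tilde\tau_r$, becomes an integral curve of $X_{\tilde H_r}$, that is, an orbit of $\tilde H_r$. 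Because a reparametrisation leaves the image of a curve unchanged, the projection of the orbit of $H$ coincides, as a point set, with the orbit of $\tilde H_r$, which is the assertion.

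The one delicate point, and the place I expect to spend most care, is precisely this passage from ``equal up to the non-vanishing factor $\alpha^r$'' to ``same orbit''. It requires that the non-vanishing of $\alpha^r$ really yield a well-defined, invertible time reparametrisation $\tilde\tau_r(t)$ covering the whole orbit, rather than merely a pointwise proportionality of vector fields. This is exactly where the standing hypothesis that the twist functions are non-zero is essential; granting it, the result reduces to the reparametrisation-invariance of orbits, and all the dynamical content has already been packaged into Proposition~\ref{p0}.
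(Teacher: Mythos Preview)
Your argument is correct and is exactly the approach the paper takes: the paper offers no separate proof of Proposition~\ref{p1} beyond the sentence ``From Proposition~\ref{p0} follows the important result'', together with the preceding Remark introducing the reparametrisation $\alpha^r={\rm d}\tilde\tau_r/{\rm d}t$. You have simply written out in full the details that the paper leaves implicit, including the observation that $\tilde H_r$ lives on the $(q^r,p_r)$ plane and the care about invertibility of the time-rescaling.
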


\begin{rmk} The Lagrange equations of the dynamics of $\tilde H_r$, expressed in times $\tilde \tau_r$ are
\begin{gather*}\label{sem}
\frac{{\rm d}^2q^r}{{\rm d}\tilde \tau_r^2}=\frac{{\rm d}}{{\rm d}q^r}\left(c_jS^j_r-\frac 12 V_r\right).
\end{gather*}
\end{rmk}

\begin{rmk} Observe that
 \begin{gather*}
\tilde H=\alpha^r\tilde H_r=H-c_1, \qquad \tilde K_j=\big(S^{-1}\big)^r_j\tilde H_r=K_j-c_j, \qquad j \neq 1,
 \end{gather*}
i.e., the St\"ackel systems associated with $H$ and $\tilde H$ coincide up to additive constants. To the constants $(c_i)$ for $(H,K_a)$ correspond the costants $(\tilde c_i=0)$ for $\big(\tilde H, \tilde K_a\big)$.
\end{rmk}

\begin{exm}{\it Twisted product of pendula.} In order to show the effect of the time-scaling described above, we consider the twisted product of the following three one-dimensional Hamiltonians
 \begin{gather*}
H_i=\frac 12 \big(p_i^2-\cos q^i\big), \qquad i=1,2, \qquad H_3=\frac 12 p_3^2,
 \end{gather*}
corresponding to two pendula and a purely inertial term, coupled together by the first row of the inverse of the $3\times 3$ matrix
 \begin{gather*}
S=\left(\begin{matrix} 2 && 1+q^1 && 2 \big(q^1\big)^2+2 \\
 3 && q^2 && \big(q^2\big)^3+2 \\
 4 && q^3 && \big(q^3\big)^2+1 \end{matrix}\right),
 \end{gather*}
which is a St\"ackel matrix in a neighborhood of the origin, since the Taylor expansion up to the second order terms of its determinant $\Delta$ around $(0,0,0)$ is $\Delta=5+5q^1-6 q^2+2 q^3$. The elements of the matrix $S^{-1}$ are therefore quite complicated rational functions that we do not need to compute explicitly but make the coupling of the $H_r$ suitable to enhance the effect of the time-scaling. We take as $(\alpha^r)$ the first row of $S^{-1}$ and consider
 \begin{gather*}
H=\alpha^rH_r=c_1.
 \end{gather*}
The quadratic first integrals of $H$ are determined by the remaining rows of $\big(S^{-1}\big)$
 \begin{gather*}
K_a=\big(S^{-1}\big)_a^rH_r=c_a,\qquad a=2,3.
 \end{gather*}

We already know from the previous section that, despite the complicated expression of the coupling terms $(\alpha^r)$, the relation among the dynamics of $H$ and of the separated Hamiltonians $\tilde H_r=H_r-c_aS^a_r$ reduces to a simple position-dependent time scaling.

We plot the numerical evaluation of the systems of Hamiltonian $H$ and
 \begin{gather*}\tilde H_1=H_1-2c_1-\big(1+q^1\big) c_2-2\big(\big(q^1\big)^2+1\big)c_3
 \end{gather*}
respectively, and project the orbits on $\big(q^1,p_1\big)$, obtaining, for the initial conditions $p_1 = 0$, $p_2 = 0$, $p_3 = 0$, $q^1 = 0.2$, $q^2 = -0.2$, $q^3 = 0$ and consequently $c_1=H = 0.09494666248$, $c_2 = 0.0916913483$, $ c_3 = -0.3797866499$, the graphs in Fig.~\ref{Fig1},
\begin{figure}[t]\centering
\includegraphics[width= 4.5cm]{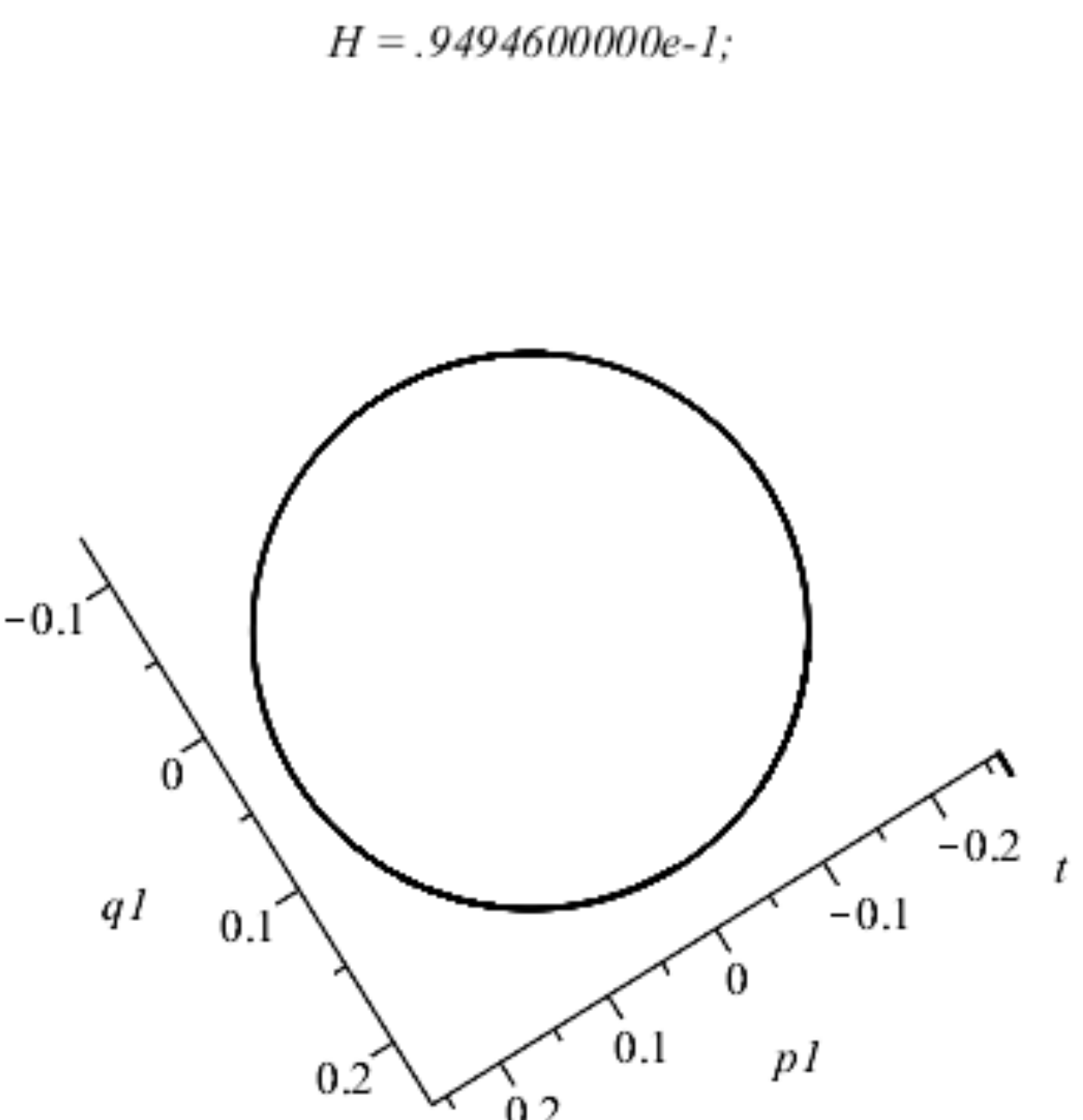}
\includegraphics[width= 4.5cm]{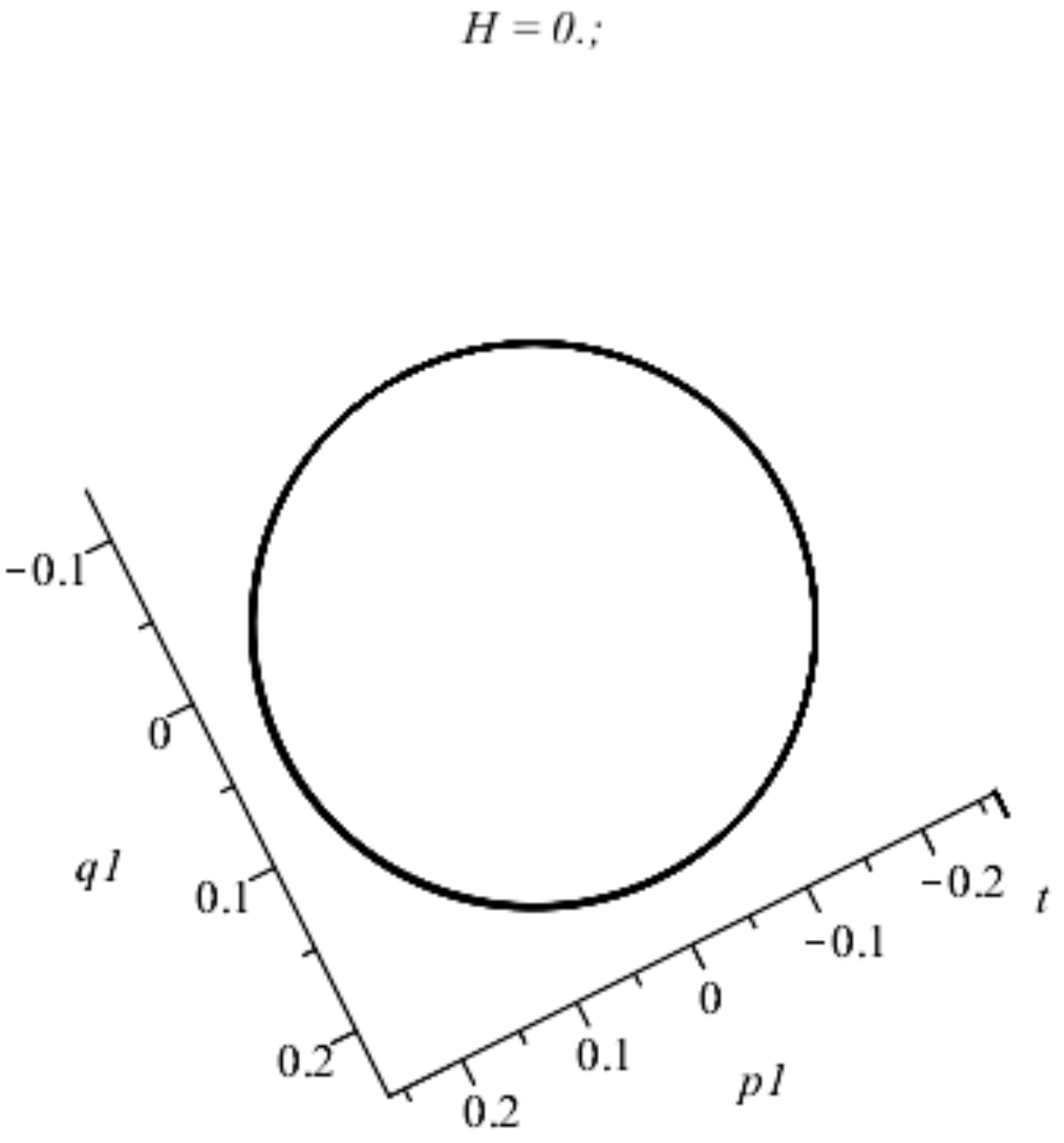}
\caption{Projections of the orbits of $ H$ and $\tilde H_1$ on $\big(p_1,q^1\big)$.}\label{Fig1}
\end{figure}
where we see that the orbits on $\big(q^1,p_1\big)$ of the two systems coincide. But, if we include the dependence on the different Hamiltonian parameters (denoted in both the graphs as $t$), we get Fig.~\ref{Fig2}
\begin{figure}[t]\centering
\includegraphics[width= 4.5cm]{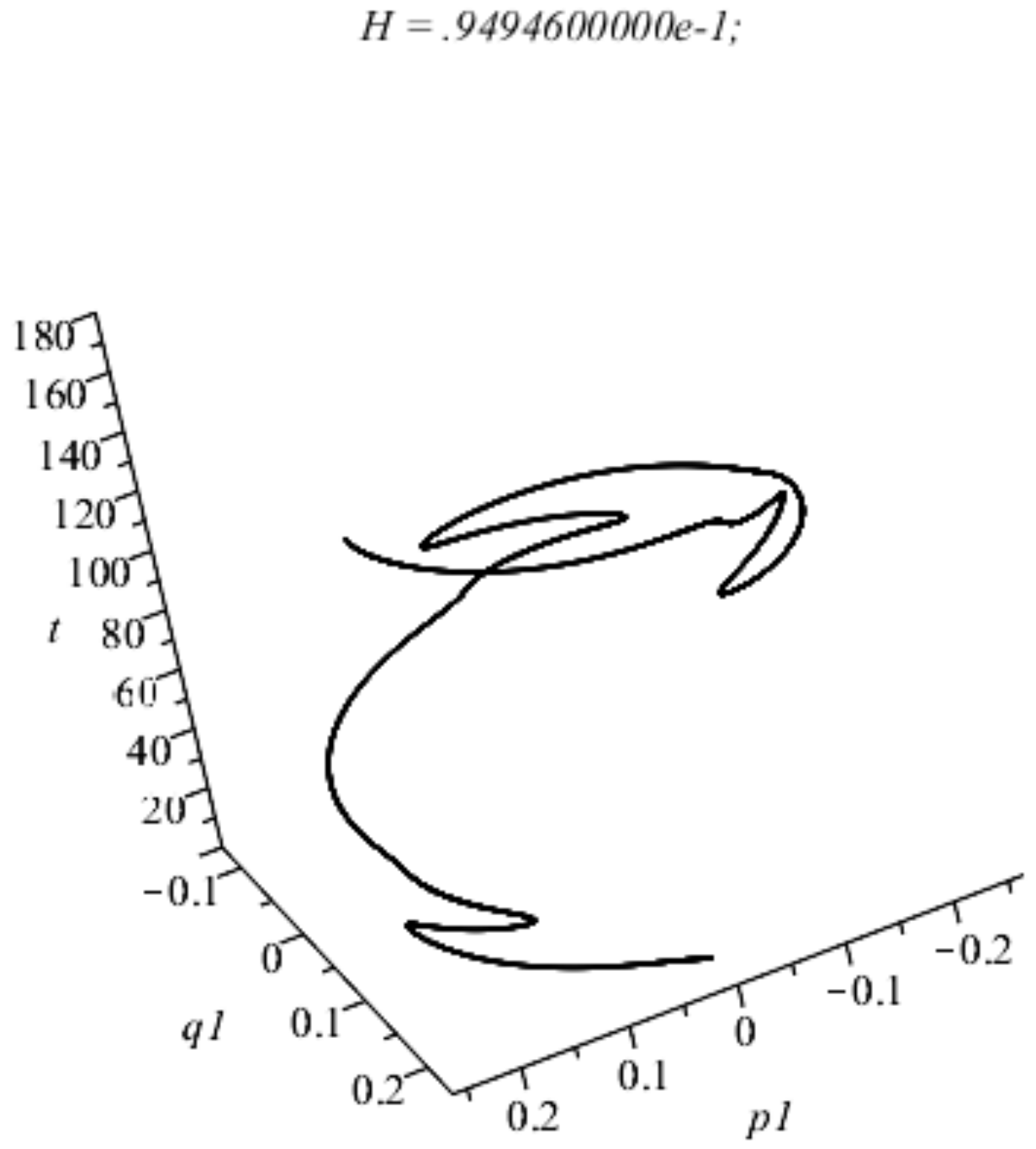}
\includegraphics[width= 4.5cm]{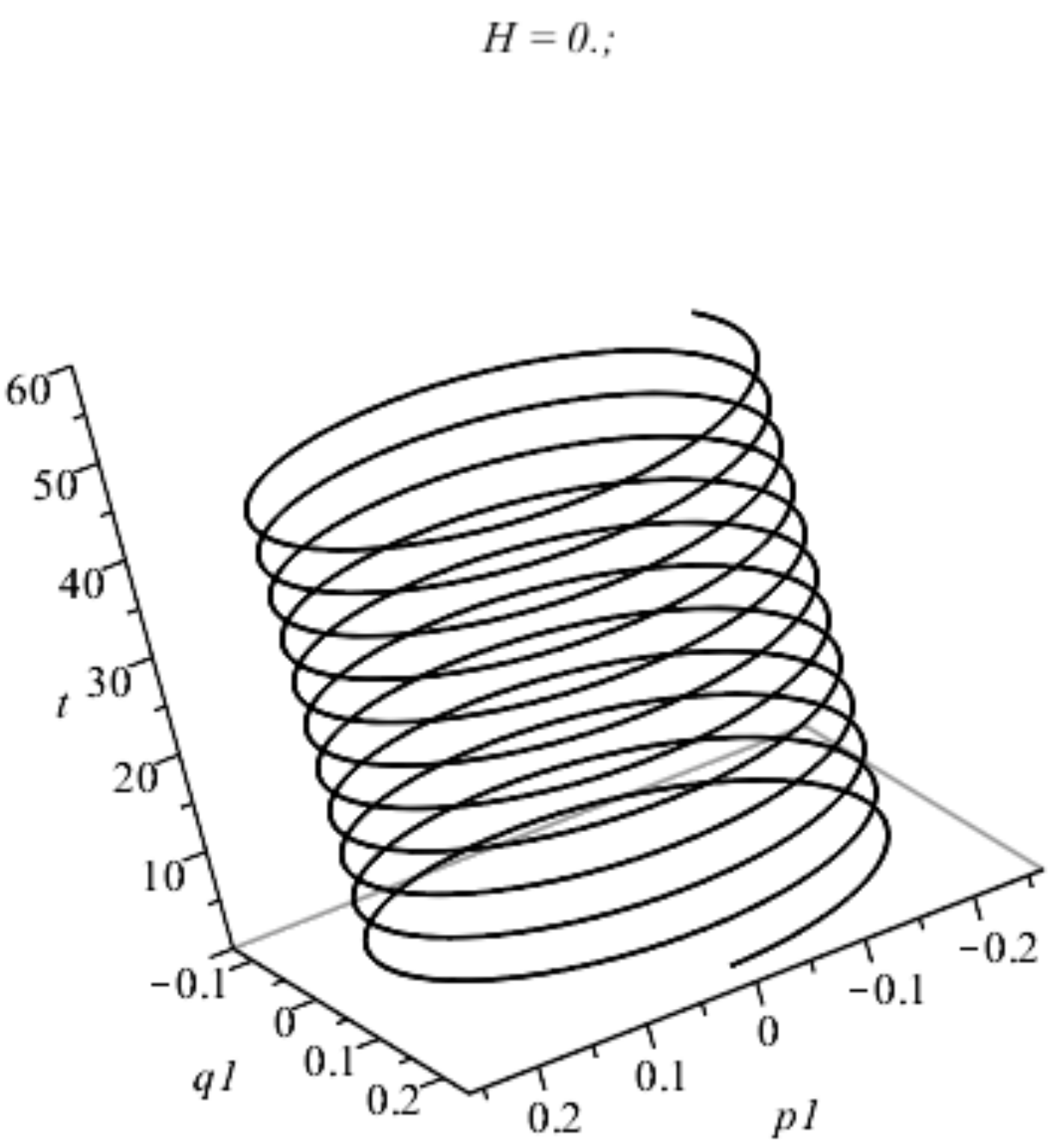}
\caption{Projections of the orbits of $ H$ and $\tilde H_1$ on $\big(t,p_1,q^1\big)$ and $\big(\tau_1,p_1,q^1\big)$.}\label{Fig2}
\end{figure}
and we can see how the dependence on the Hamiltonian parameters can be extremely different in the two cases.
\end{exm}

\begin{exm}\emph{Twisted product with constant coefficients of harmonic oscillators.} Taking twisted products of Hamiltonians seems an interesting way to establish an interaction among Hamiltonian systems. An example, even if somehow trivial, is provided by the twisted product with constant coefficients of harmonic oscillators. Let
 \begin{gather*}H_i=\frac 12 \big(p_i^2+\omega_i^2 \big(q^i\big)^2 \big), \qquad i=1,\dots,n \end{gather*}
be a finite set of harmonic oscillators. Let
 \begin{gather*}H=\alpha^iH_i, \qquad \alpha^i \in \mathbb R^+, \end{gather*}
be their twisted product with constant twist functions. The~$H_i$ are all constants of motion of $H$ and there is actually no interaction among them. However, some effect of the twisted product is nevertheless evident. The Hamilton equations of $H$ are
 \begin{eqnarray*}\frac {{\rm d}q^i}{{\rm d}t}&=&\alpha^i\frac {{\rm d}q^i}{{\rm d}\tau_i}=\alpha^ip_i, \\
 \frac {{\rm d}p_i}{{\rm d}t}&=&-\alpha^i\frac {{\rm d}p_i}{{\rm d}\tau_i}=-\alpha^i\omega_i^2q^i. \end{eqnarray*}
The general solution of these equations is
 \begin{gather*}
q^i(t)=c_1^i\sin\big(\alpha^i \omega^i t\big)+c_2^i\cos \big(\alpha^i \omega^i t\big),\qquad c^i_j\in \mathbb R.
 \end{gather*}
We see that, for example, the choice $\alpha^i=k/\omega_i$, for any real positive $k$, determines a time-scaling that gives to all the oscillators the same frequency $k$ with respect to $t$ (as well as any other real positive frequency for different choices of $k$ for each $i$). Namely, the rescaling is in this case
 \begin{gather*}t=\sum _{i=1}^n\frac 1{\alpha^i}\tau_i+t_0,\qquad \text{or}\qquad \tau_i=\alpha^i t+\tau_i^0. \end{gather*}
The frequency of each oscillator $H_i$ with respect to its own Hamiltonian parameter $\tau_i$ remains clearly $\omega_i$.
\end{exm}

\section{Block-separation} \label{section5}

The results of the previous section can be generalized as follows, leading to a kind of partial separation of variables that we call block-separation.

Let $M$ be a $N$-dimensional manifold. Let us consider a partition of a coordinate system on~$M$ organized as follows. For $n\leq N$, consider for each integer $r=1,\dots,n$ the integers $n_r$ such that
 \begin{gather*}
N=n_1+\dots+n_n.
 \end{gather*}
The coordinate system is therefore composed of $n$ blocks, and for each $r\leq n$ we have an $r$-block of coordinates that we denote as $\big(q^{r_1}, \dots, q^{r_{n_r}}\big)$. We call $M_r$ the manifold spanned by the $r$-block of coordinates. Consider $T^*M$ with the conjugate $r$-block momenta $(p_{r_1}, \dots, p_{r_{n_r}})$.

Let us consider
\begin{gather*}
H_r=\frac 12 g^{r_ir_j}_rp_{r_i}p_{r_j}+V_r\big(q^{r_k}\big),
\end{gather*}
and the $n$ block-separated equations
\begin{gather*}
H_r=S_r^ac_a,\qquad a=1,\dots,n,
\end{gather*}
where we assume that $g_r^{r_ir_j}$, $S_r^a$ and $V_r$ are functions of coordinates of the $r$-block only and $c_a$ are constants. If we assume that the $n\times n$ matrix $(S_r^a)$ is invertible, and we call it {\it block-St\"ackel matrix}, then we can write the $n$ equations
\begin{gather}\label{11}
\big(S^{-1}\big)_a^rH_r=c_a.
\end{gather}
We denote $\alpha^r=\big(S^{-1}\big)_1^r$ and call
\begin{gather*}
H=\alpha^rH_r, \qquad K_a=\big(S^{-1}\big)_a^rH_r, \qquad a=2,\dots, n.
\end{gather*}
Hence, $H$ is in the form of twisted product and it is a natural Hamiltonian whose metric tensor~$G$ is block-diagonalized, with components
 \begin{gather*}
G^{r_ir_j}=\alpha^rg^{r_ir_j}_r,\qquad G^{r_is_j}=0, \qquad s\neq r,
 \end{gather*}
and whose scalar potential has the form
 \begin{gather*}
V=\alpha^rV_r,
 \end{gather*}
while the scalar potentials in $K_a$
are
\begin{gather}\label{V}
W_a=\big(S^{-1}\big)_a^rV_r.
\end{gather}

A necessary condition for the procedure of above is that the (\ref{11}) are indeed constants of motion of $H$.

\begin{prop}\label{p4}The $n$ functions $(H,K_a)$ are all independent, quadratic in the momenta and pairwise in Poisson involution.
\end{prop}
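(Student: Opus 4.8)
The claim has three parts of unequal difficulty; the involution is the only substantial point. The plan is to dispose of the first two quickly and then reduce the bracket to a manifestly vanishing expression.

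That each $K_a$ is quadratic in the momenta is immediate: by construction $K_a=\big(S^{-1}\big)_a^rH_r$ is a linear combination, with coordinate-dependent coefficients, of the natural Hamiltonians $H_r$ living in disjoint blocks, so $K_a=\frac12\big(S^{-1}\big)_a^rg_r^{r_ir_j}p_{r_i}p_{r_j}+W_a$ with $W_a$ the scalar potential~(\ref{V}); this is again a natural Hamiltonian. For independence I would examine the momentum parts of the differentials. Since $H_r$ depends only on the phase variables of the $r$-block, one has $\partial_{p_{t_k}}K_a=\big(S^{-1}\big)_a^t\,g_t^{t_kt_l}p_{t_l}$, so a relation $\mu^a\,dK_a=0$ with constants $\mu^a$, read off in the $dp_{t_k}$-directions of each fixed block $t$, gives $\big(\mu^a\big(S^{-1}\big)_a^t\big)g_t^{t_kt_l}p_{t_l}=0$. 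At a generic point the block covector $\big(g_t^{t_kt_l}p_{t_l}\big)_k$ is nonzero, hence $\mu^a\big(S^{-1}\big)_a^t=0$ for every $t$; invertibility of $S^{-1}$ then forces $\mu^a=0$, which is the functional independence of $(H,K_a)$.

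For the involution the engine is the block analogue of~(\ref{5bis}). Differentiating $\big(S^{-1}\big)_a^rS_r^b=\delta_a^b$ along a coordinate $q^{t_k}$ of the $t$-block and using that the $r$-th row $S_r^b$ depends on the $r$-block alone (so $\partial_{q^{t_k}}S_r^b=0$ unless $r=t$), I obtain
\begin{gather*}
\partial_{q^{t_k}}\big(S^{-1}\big)_a^m=-\big(S^{-1}\big)_a^t\,C^m_{t_k},\qquad C^m_{t_k}:=\big(\partial_{q^{t_k}}S_t^b\big)\big(S^{-1}\big)_b^m,
\end{gather*}
whose crucial feature is that all dependence on the label $a$ is confined to the prefactor $\big(S^{-1}\big)_a^t$, while $C^m_{t_k}$ is independent of $a$. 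Writing $B_a^r=\big(S^{-1}\big)_a^r$, I would then expand $\{K_a,K_b\}$ in the block coordinates; the terms in which both derivatives fall on the same $H_t$ assemble into $B_a^tB_b^t\{H_t,H_t\}$ and vanish, leaving
\begin{gather*}
\{K_a,K_b\}=\sum_{t,k,r}H_r\,\partial_{p_{t_k}}H_t\,\big(B_b^t\,\partial_{q^{t_k}}B_a^r-B_a^t\,\partial_{q^{t_k}}B_b^r\big).
\end{gather*}
Substituting the identity gives $B_b^t\partial_{q^{t_k}}B_a^r=-B_a^tB_b^t\,C^r_{t_k}$ and the same with $a,b$ interchanged, so the parenthesis collapses to $-B_a^tB_b^t C^r_{t_k}+B_a^tB_b^t C^r_{t_k}=0$ and $\{K_a,K_b\}=0$ identically.

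The main obstacle is precisely the derivation and correct use of the block-St\"ackel identity above: one must recognise that, although $\big(S^{-1}\big)_a^r$ is a genuinely coupled function of all the blocks, its coordinate derivatives factor so that the index $a$ decouples into a single prefactor. This factorization — a direct consequence of each row of $S$ depending only on its own block — is what turns the a priori $a\leftrightarrow b$ antisymmetric bracket into a symmetric summand that cancels, and it is exactly the block generalization of the classical St\"ackel involution computation underlying~(\ref{5bis}).
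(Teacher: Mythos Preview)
Your proof is correct and follows essentially the same route as the paper. The paper's argument is extremely terse: it records the block analogue of~(\ref{5bis}) in the form $(\partial_{r_i}\alpha_a^s)S_s^j=-\alpha_a^r\partial_{r_i}S_r^j$ (equivalent to your factorisation $\partial_{q^{t_k}}(S^{-1})_a^m=-(S^{-1})_a^tC^m_{t_k}$ after contracting with $(S^{-1})_j^m$) and then simply asserts that the statement follows from the definition of the Poisson bracket; you have carried out that expansion explicitly and also supplied the independence and quadraticity arguments the paper omits.
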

\begin{proof} Since $S$ is a block-St\"ackel matrix, in analogy to (\ref{5bis}) we have
\begin{gather*}
\partial_{r_i}\alpha^s_aS_s^j=\partial_{r_i}\big(\alpha^s_aS_s^j\big)-\alpha^s_a\partial_{r_i}S^j_s=\partial_{r_i}\big(\delta^j_1\big)-\alpha^r_a\partial_{r_i}S^j_r=-\alpha^r_a\partial_{r_i}S^j_r,
\end{gather*}
where $r$ is not summed and $\alpha^r_a=\big(S^{-1}\big)^r_a$.
Then, from the definition of Poisson bracket and of block-separated coordinates, we get the statement.
\end{proof}

For the dynamics of $H$, an analogue of Proposition \ref{p1} holds.

\begin{prop} The dynamics of $H$ coincides in each $r$-block with the dynamics of
 \begin{gather*}\tilde H_r=H_r-c_aS^a_r, \end{gather*}
 up to a reparametrization of the Hamiltonian parameter given by $\alpha^r={\rm d}\tilde \tau_r/{\rm d}t$. So, the projections of the orbits of $H$ on each $T^*M_r$ coincide with the orbits of $\tilde H_r$.
\end{prop}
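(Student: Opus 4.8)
The plan is to generalize the computation that established Propositions~\ref{p0} and~\ref{p1} for the one-dimensional St\"ackel case, replacing scalar blocks with the $n_r$-dimensional blocks $M_r$ and keeping careful track of which indices are summed over within a block versus across blocks. First I would write out Hamilton's equations for $H=\alpha^r H_r$ in the time $t$, using the block-diagonal structure of $G$. For a coordinate $q^{r_i}$ in the $r$-block one gets
\begin{gather*}
\dot q^{r_i}=\frac{\partial H}{\partial p_{r_i}}=\alpha^r\frac{\partial H_r}{\partial p_{r_i}},\qquad
\dot p_{r_i}=-\frac{\partial H}{\partial q^{r_i}}=-\big(\partial_{r_i}\alpha^s\big)H_s-\alpha^r\partial_{r_i}H_r,
\end{gather*}
where the cross term $\big(\partial_{r_i}\alpha^s\big)H_s$ sums over all blocks $s$ and records the position-dependence of the twist functions. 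Along an integral curve of $X_H$ I replace the $H_s$ by the constants coming from the block-separated equations $H_s=S^j_s c_j$, exactly as in~\eqref{se}.

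The key algebraic step is the block analogue of identity~\eqref{5bis}, which is already supplied by the computation in the proof of Proposition~\ref{p4}: since $\alpha^s=\big(S^{-1}\big)^s_1$ is a row of the inverse of the block-St\"ackel matrix $S$, and each $S^j_s$ depends only on the $s$-block, one has $\partial_{r_i}\big(\alpha^s S^j_s\big)=\partial_{r_i}\delta^j_1=0$, whence $\big(\partial_{r_i}\alpha^s\big)S^j_s=-\alpha^r\,\partial_{r_i}S^j_r$ with $r$ not summed. Substituting this into the expression for $\dot p_{r_i}$ collapses the cross-block sum into a single block term, giving
\begin{gather*}
\dot p_{r_i}=\alpha^r\,\partial_{r_i}\big(c_j S^j_r\big)-\alpha^r\partial_{r_i}V_r/? ,
\end{gather*}
more precisely $\dot p_{r_i}=\alpha^r\partial_{r_i}\big(c_a S^a_r-\tfrac12 V_r\big)$, so that both $\dot q^{r_i}$ and $\dot p_{r_i}$ factor through $\alpha^r$ times derivatives of the single-block Hamiltonian $\tilde H_r=H_r-c_a S^a_r$.

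From here the conclusion follows formally as in Proposition~\ref{p0}: the restriction of $X_H$ to the $(\partial_{r_i},\partial^{r_i})$ directions equals $\alpha^r X_{\tilde H_r}$, with $r$ not summed, because $\partial^{r_i}\tilde H_r=\partial H_r/\partial p_{r_i}$ and $\partial_{r_i}\tilde H_r$ matches the bracketed term above. Setting $\alpha^r={\rm d}\tilde\tau_r/{\rm d}t$ interprets $\alpha^r$ as a position-dependent time-rescaling, and the proportionality of the vector fields means the two curves trace the same subset of $T^*M_r$, differing only in parametrization; this yields the stated coincidence of the projected orbits.

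I expect the main obstacle to be bookkeeping rather than conceptual: one must verify that the cancellation in the block version of~\eqref{5bis} really holds for \emph{every} index $r_i$ within a block and not merely for a scalar block, i.e.\ that the fact that $S^j_s$ and $g_s$ depend only on the $s$-block coordinates is exactly what is needed to kill the cross terms $\partial_{r_i}\alpha^s$ when $s\neq r$. The subtlety is that $\alpha^r$ itself is a function on all of $M$, so the products $\alpha^r H_r$ do couple the blocks; the separation survives only because the \emph{potential} and \emph{metric} pieces of each $H_r$ are block-local and because $\alpha^r$ is a row of $S^{-1}$. Once this is confirmed, the argument is a direct transcription of the $N$ one-dimensional case to $n$ blocks, with within-block index sums over $i=1,\dots,n_r$ left intact.
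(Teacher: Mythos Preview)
Your approach is exactly the paper's: the paper's proof merely says ``the proof follows the same reasoning of the proof of Proposition~\ref{p1}'' and records $(X_H)_r=\alpha^r X_{\tilde H_r}$, and your write-up supplies precisely those details, including the block analogue of~\eqref{5bis} already noted in Proposition~\ref{p4}. One slip to clean up: in the block case the kinetic part $\tfrac12 g_r^{r_ir_j}p_{r_i}p_{r_j}$ of $H_r$ generally depends on the $q^{r_k}$, so the intermediate display $\dot p_{r_i}=\alpha^r\partial_{r_i}\big(c_aS^a_r-\tfrac12 V_r\big)$ is not right; what \emph{is} right is the statement you make immediately afterward, $\dot p_{r_i}=-\alpha^r\,\partial_{r_i}\tilde H_r$, with the full $H_r$ (kinetic term included) inside the derivative.
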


\begin{proof} The proof follows the same reasoning of the proof of Proposition~\ref{p1}. It follows that, denoting with $(X_H)_r$ the $r$-block component of the Hamiltonian vector field $X_H$ of $H$,
 \begin{gather*}
(X_H)_r=\dot q^{r_i}\partial_{r_i}+\dot p_{r_i}\partial^{r_i},
 \end{gather*}
we have
\begin{gather*}
(X_H)_r=\alpha^rX_{\tilde H_r},
\end{gather*}
where $X_{\tilde H_r}$ is the Hamiltonian vector field of $\tilde H_r$. Therefore, if $(X_H)_r$ is tangent to any submanifold $f\subseteq T^*M$, then also $X_{\tilde H_r}$ is, and vice-versa.
Hence, just as for the St\"ackel systems, the dynamics of $H$ is determined in each $r$-block, up to reparametrizations of the Hamiltonian parameter, by the dynamics of the $\tilde H_r$, with the difference that the $\tilde H_r$ are no longer one-dimensional.
\end{proof}

In this way, the time-independent dynamics of $H$ can be exactly decomposed into the $n$ lower-dimensional separated dynamics of Hamiltonians $\tilde H_r$. The $\tilde H_r$ share with the $H_r$, factors of the twisted product $H$, the same inertial terms, while the scalar potential is modified by the addition of the term $-c_aS^a_r$.

Partial separation of Hamilton--Jacobi equation was introduced by di Pirro in \cite{dP} and gene\-ra\-lized by St\"ackel in~\cite{St97}. He introduced the $n\times n$ matrix $S$ and his results are analogue to our Proposition \ref{p4}. St\"ackel obtained sufficient conditions for partial separation of the Hamilton--Jacobi equation of natural Hamiltonians. His work has been extended more recently in~\cite{Ha} and~\cite{Ma}, including the study of partial separation of the Schr\"odinger equation, obtaining again sufficient conditions for partial separation, and a more detailed form of the components of the metric tensor in partially separable coordinates. We remark that, by introducing twisted products, our characterization of block-separation provides necessary and sufficient conditions for it, in analogy with St\"ackel theory of complete separation. We do not make here a strict comparison between our results and those of \cite{St97} and \cite{Ha}, since these last results are strictly related to Hamilton--Jacobi theory and there is no consideration of the dynamical relations among the $N$-dimensional Hamiltonian and the separated Hamiltonians, which is our main interest. The detailed characterization of the partially separable metric's components in~\cite{Ha} should eventually coincide with a similar characterization of block-separable metrics. We do not consider here the distinction between linear and quadratic in the momenta first integrals (from linear first integrals one can always obtain quadratic ones). It is remarkable that in the last century very few works have been devoted to partial separation of Hamilton--Jacobi equation. This is understandable when one considers that Hamilton--Jacobi theory is of not easy application, apart the simplest cases, even when completely separated integrals of the Hamilton--Jacobi equations do exist. Some applications of partially separated integrals of the Hamilton--Jacobi equation, in order to generate new {\it possible} first integrals of the Hamiltonian, are presented in~\cite{Le} and~\cite{Ni}. Our approach based upon the block-separated dynamics, instead of the partially separated Hamilton--Jacobi equation, appears to be completely new and could be more suitable for applications of the theory, particularly in the analysis of systems with many degrees of freedom.

\subsection{Block-Eisenhart and block-Levi-Civita equations}\label{section5.1}
We can see, with some surprise, that the characterisation of block-separation includes tools developed for St\"ackel complete separation. Indeed, we can formulate classical results by Eisenhart and Levi-Civita in block form. If we assume that $\big(q^i\big)$ are twisted coordinates $\big(q^{r_i}\big)$, then for $G$ we have
 \begin{gather*}
G_{r_ir_j}=\frac 1{\alpha^r}g_{r_ir_j}^r,\qquad r \ \text{not summed},
 \end{gather*}
so that
 \begin{gather*} G_{r_k a}G^{a s_j}=\delta_{r_k}^{s_j}. \end{gather*}

Moreover, if we assume that the coordinates $\big(q^{r_i}\big)$ are block-separated
 \begin{gather*}
k^{r_ir_j}_a=\big(S^{-1}\big)^r_ag^{r_ir_j}_r,
 \end{gather*}
where $k^{bc}_a$ are the components of the 2-tensor associated with $K_a$. Then,
 \begin{gather*}
(k_a)^{r_i}_{r_j}=\frac 1{\alpha^r}\big(S^{-1}\big)^r_a\delta^{r_i}_{r_j},
 \end{gather*}
and we can consider the functions
 \begin{gather*}
\lambda _a^r=\frac 1{\alpha^r}\big(S^{-1}\big)^r_a,\qquad r\; \text{not summed},
 \end{gather*}
as the analogue of eigenvalues of Killing tensors $k_a$ in St\"ackel theory. They are indeed the eigenvalues with respect to $G$ of the Killing tensors $(k_a)$ associated with the first integrals $(K_a)$.

\begin{prop} In block-separated coordinates, we have that
 \begin{gather*}
\{H,K_a\}=0,
 \end{gather*}
if and only if the {\rm block-Eisenhart equations}
\begin{gather}\label{Ee}
\partial_{r_k}\lambda^s_a=\big(\lambda_a^r-\lambda_a^s\big)\partial_{r_k}\ln |\alpha^s|,
\end{gather}
hold, with $r$, $s=1,\dots, n$, for all $r_i$, $s_j$ in the respective separated blocks, and \eqref{V} hold.
\end{prop}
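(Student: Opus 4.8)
The plan is to evaluate $\{H,K_a\}$ in the block-separated coordinates and to sort it by its homogeneity degree in the momenta. I would write $H=T_H+V$ and $K_a=T_{K_a}+W_a$, where $T_H=\alpha^rT_r$, $T_{K_a}=\alpha^r_aT_r$ with $T_r=\frac12 g_r^{r_ir_j}p_{r_i}p_{r_j}$ the geodesic part of $H_r$ and $\alpha^r_a=(S^{-1})^r_a$. Since $V$ and $W_a$ are functions on the base, $\{V,W_a\}=0$, so $\{H,K_a\}$ has no term of degree $0$ or $2$ in the momenta: it is the sum of a cubic part $\{T_H,T_{K_a}\}$ and a linear part $\{T_H,W_a\}+\{V,T_{K_a}\}$, and these must vanish separately. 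I would then match the cubic part with the block-Eisenhart equations and the linear part with \eqref{V}.

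For the cubic part I would expand $\{T_H,T_{K_a}\}=\{\alpha^rT_r,\alpha^s_aT_s\}$ by the Leibniz rule, using that $T_r$ involves only $r$-block variables (so $\{T_r,T_s\}=0$ and the twist functions Poisson-commute with one another). The surviving terms give
\begin{gather*}
\{T_H,T_{K_a}\}=\sum_{r,s}g_r^{r_ir_k}p_{r_i}T_s\,B^{rs}_{r_k},\qquad B^{rs}_{r_k}=\alpha^r_a\partial_{r_k}\alpha^s-\alpha^r\partial_{r_k}\alpha^s_a .
\end{gather*}
Substituting $\alpha^s_a=\lambda^s_a\alpha^s$ and $\alpha^r_a=\lambda^r_a\alpha^r$ and simplifying yields
\begin{gather*}
B^{rs}_{r_k}=\alpha^r\alpha^s\big[(\lambda^r_a-\lambda^s_a)\partial_{r_k}\ln|\alpha^s|-\partial_{r_k}\lambda^s_a\big],
\end{gather*}
so $B^{rs}_{r_k}=0$ is precisely the block-Eisenhart equation \eqref{Ee} for $(r,s,r_k)$, the diagonal case $r=s$ reducing to $\partial_{r_k}\lambda^r_a=0$. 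To conclude that the cubic polynomial vanishes if and only if all $B^{rs}_{r_k}=0$, I would read off monomials: for $r\neq s$ the monomial $p_{r_i}p_{s_m}p_{s_l}$ is produced only by the $(r,s)$ summand, and nondegeneracy of each $g_r$ forces $B^{rs}_{r_k}=0$; the all-$r$ monomials isolate the diagonal equations.

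For the linear part, a direct computation gives
\begin{gather*}
\{T_H,W_a\}+\{V,T_{K_a}\}=\sum_r\big[(k_a)^{r_ir_k}\partial_{r_k}V-G^{r_ir_k}\partial_{r_k}W_a\big]p_{r_i}.
\end{gather*}
Using $(k_a)^{r_ir_k}=\lambda^r_aG^{r_ir_k}$ and the nondegeneracy of $G$ on each block, its vanishing is equivalent to
\begin{gather*}
\partial_{r_k}W_a=\lambda^r_a\partial_{r_k}V,\qquad \text{for all $r$, $r_k$}.
\end{gather*}
It remains to match this with \eqref{V}. Including the trivial case $a=1$ ($W_1=V$, $\lambda^r_1=1$), I would set $V_r:=S^b_rW_b$ and check, using only that $S^b_r$ depends on the $r$-block together with the inverse-matrix identities $S^a_r\alpha^r_b=\delta^a_b$ and $\alpha^r_aS^a_s=\delta^r_s$, that $\partial_{s_l}V_r=0$ for $s\neq r$ and that $V=\alpha^rV_r$, $W_a=\alpha^r_aV_r$; conversely, substituting \eqref{V} into $\partial_{r_k}W_a-\lambda^r_a\partial_{r_k}V$ reduces it to a multiple of $\sum_sB^{rs}_{r_k}V_s$, which vanishes once the block-Eisenhart equations hold.

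I expect the main obstacle to be this potential-side equivalence rather than the cubic computation. The delicate point is the bookkeeping of coordinate dependence: that $S^b_r$ depends only on its own block (whence, by a cofactor argument, $\lambda^r_a$ is independent of the $r$-block, so the diagonal block-Eisenhart equations hold automatically), and that the passage between the differential condition $\partial_{r_k}W_a=\lambda^r_a\partial_{r_k}V$ and the algebraic form \eqref{V} rests on the inverse-matrix identity $\alpha^r_aS^a_s=\delta^r_s$ rather than on any integration. Once the momentum-degree split is in place, the remaining manipulations parallel the one-dimensional Eisenhart argument recalled in Section~\ref{section2} and the computation \eqref{5bis} behind Proposition~\ref{p4}.
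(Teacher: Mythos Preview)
Your proposal is correct and follows essentially the same approach as the paper: split $\{H,K_a\}$ by momentum degree, identify the cubic part with the block-Eisenhart equations and the linear part with the potential condition~\eqref{V}. The only differences are organizational. You exploit the twisted-product factorization $T_H=\alpha^rT_r$, $T_{K_a}=\alpha^r_aT_r$ and the Leibniz rule to obtain $B^{rs}_{r_k}$ directly, whereas the paper differentiates the full block-metric components and gets the extra term $(\lambda^s_a-\lambda^r_a)\partial_{r_k}g^{s_is_j}$ in its intermediate equation, which it then discards using $\partial_{r_k}g^{s_is_j}=0$ for $r\neq s$; your route simply avoids that term. On the potential side you are more explicit than the paper, which asserts ``the first-order terms in the momenta vanish if and only if~\eqref{V} hold'' without details; your derivation of $\partial_{r_k}W_a=\lambda^r_a\partial_{r_k}V$ and the passage to~\eqref{V} via $V_r:=S^b_rW_b$ and the inverse-matrix identities is exactly what is needed to justify that assertion.
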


\begin{proof} By expanding
 \begin{gather*}
\{H,K_a\}=0,
 \end{gather*}
in block-separable coordinates, collecting homogeneous terms in the momenta and dividing by~$\alpha^r\alpha^s$, from the higher order terms in the momenta we have,
\begin{gather}\label{ee}
\big(\lambda^s_a-\lambda^r_a\big)\partial_{r_k} g^{s_is_j}+g^{s_is_j}\big[\partial_{r_k}\lambda^s_a-\big(\lambda_a^r-\lambda_a^s\big)\partial_{r_k}\ln |\alpha^s|\big]=0,
\end{gather}
for all $r$, $s$, $r_k$, $s_i$, $s_j$ in the respective blocks. If $r=s$ the equations become
 \begin{gather*}
g^{s_is_j}\partial_{s_k}\lambda^s_a=0,
 \end{gather*}
and, if $r\neq s$, then $\partial_{r_k} g^{s_is_j}=0$. Hence, (\ref{ee}) is equivalent to
 \begin{gather*}
g^{s_is_j}\big[\partial_{r_k}\lambda^s_a-\big(\lambda_a^r-\lambda_a^s\big)\partial_{r_k}\ln |\alpha^s|\big]=0.
 \end{gather*}
If $g^{s_is_j}=0$, the equations are identically satisfied, otherwise, we have~(\ref{Ee}). Since not all the $g^{s_is_j}$ are zero, we have the statement. The first-order terms in the momenta vanish if and only if~(\ref{V}) hold.
\end{proof}

By definition of the $\lambda_a^s$ and of the $\alpha^s$, we have the equations
 \begin{gather*}
S^s_r\alpha^r=\delta^s_1, \qquad S^s_r\lambda_a^r\alpha^r=\delta^s_a.
 \end{gather*}
We observe that, after putting $\alpha^r=g^{rr}$, the previous equations are identical to the relations typical of St\"ackel systems. In the same way, the block-Eisenhart equations become the standard Eisenhart equations.

\begin{prop} The block-Eisenhart equations \eqref{Ee} hold if and only if $(S^r_a)$ is a block-St\"ackel matrix.
\end{prop}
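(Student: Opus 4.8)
The plan is to read this as the block analogue of the classical equivalence between the Eisenhart equations and the St\"ackel-matrix condition, and to carry it out by differentiating the inverse relation $S^{-1}S=I$. Introduce the matrix $B$ with entries $B^r_a=(S^{-1})^r_a=\alpha^r\lambda_a^r$ (here $r$ is not summed, and $\lambda_1^r=1$ because $\alpha^r=(S^{-1})^r_1$). The block-St\"ackel property is precisely the statement that $\partial_{s_k}S^a_r=0$ whenever $s\neq r$, so the whole question reduces to controlling how a derivative $\partial_{s_k}$ along a coordinate in a foreign block distributes among the entries of $S$ and of $B=S^{-1}$. The single algebraic tool is the differentiated inverse identity $\partial_{s_k}S=-S(\partial_{s_k}B)S$, used together with the contraction relations $S^s_r\alpha^r=\delta^s_1$ and $S^s_r\lambda_a^r\alpha^r=\delta^s_a$ recorded just above the proposition (these are $SB=I$), plus the companion identity $BS=I$, which in components reads $\sum_q\lambda_q^s S^q_r=\tfrac1{\alpha^s}\delta^s_r$.

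For the forward implication I would assume \eqref{Ee} and first compute $\partial_{s_k}B^r_b$. Expanding $\partial_{s_k}(\alpha^r\lambda_b^r)$ and substituting the block-Eisenhart expression $\partial_{s_k}\lambda_b^r=(\lambda_b^s-\lambda_b^r)\partial_{s_k}\ln|\alpha^r|$, the identity $\alpha^r\partial_{s_k}\ln|\alpha^r|=\partial_{s_k}\alpha^r$ cancels the $-\lambda_b^r$ contribution and leaves the rank-one form $\partial_{s_k}B^r_b=(\partial_{s_k}\alpha^r)\lambda_b^s$. Feeding this into $\partial_{s_k}S=-S(\partial_{s_k}B)S$ and contracting the right-hand factor $S$ against $\lambda^s$ by means of $\sum_q\lambda_q^s S^q_r=\tfrac1{\alpha^s}\delta^s_r$, the factor $\delta^s_r$ emerges, so that $\partial_{s_k}S^a_r$ is proportional to $\delta^s_r$ and therefore vanishes for $r\neq s$. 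This is exactly the block-St\"ackel condition.

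For the converse I would assume $\partial_{s_k}S^a_r=0$ for $s\neq r$, so that $\partial_{s_k}S$ is supported on its $s$-th column, and again use $\partial_{s_k}B=-B(\partial_{s_k}S)B$ to obtain $\partial_{s_k}B^r_b=c^r\lambda_b^s$ for some vector $c^r$ depending on $s$, $k$. Comparing this with the direct expansion $\partial_{s_k}B^r_b=(\partial_{s_k}\alpha^r)\lambda_b^r+\alpha^r\partial_{s_k}\lambda_b^r$ and evaluating at $b=1$, where $\lambda_1^r=\lambda_1^s=1$ and $\partial_{s_k}\lambda_1^r=0$, pins down $c^r=\partial_{s_k}\alpha^r$. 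Substituting this value back for general $b$ and dividing through by $\alpha^r$ reproduces \eqref{Ee}.

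I expect the only genuine obstacle to be the index bookkeeping: keeping the block labels $r$, $s$ distinct from the first-integral labels $a$, $b$, respecting the up/down placements, and---crucially in the converse---using the normalisation $\lambda_1^r=1$ to identify the unknown rank-one vector $c^r$. Once the rank-one structure of $\partial_{s_k}B$ is recognised, both directions are short; the computation is the faithful block lift of the classical Eisenhart--St\"ackel argument recovered under the substitution $\alpha^r=g^{rr}$ noted above.
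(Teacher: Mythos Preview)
Your argument is correct in both directions; the rank-one computation $\partial_{s_k}B^r_b=(\partial_{s_k}\alpha^r)\lambda_b^s$ and the contraction $\sum_a\lambda_a^s S^a_r=\tfrac1{\alpha^s}\delta^s_r$ do exactly what you claim, and the normalisation $\lambda_1^r=1$ legitimately pins down $c^r$ in the converse. The paper does not supply an explicit proof of this proposition: it simply records the relations $S^s_r\alpha^r=\delta^s_1$, $S^s_r\lambda_a^r\alpha^r=\delta^s_a$, observes that under $\alpha^r=g^{rr}$ these and the block-Eisenhart equations reduce to the classical St\"ackel relations and Eisenhart equations, and implicitly defers to the standard derivation in~\cite{Be}. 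Your proof is precisely that standard derivation written out in block notation, so there is no divergence in approach---you have just made explicit what the paper leaves to the reader.
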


As for the St\"ackel systems, the block-Levi-Civita equations can be considered as the integrability conditions of the block-Eisenhart equations. The derivation is essentially the same as in~\cite{Be}.

It is therefore straightforward to see that the block-diagonalized coordinates $\big(q^1,\dots, q^N\big)$ are block-separated for the Hamiltonian $H$ if and only if the block-Levi-Civita equations
\begin{gather*}\label{bLC}
\alpha^{r}\alpha^{s}\partial_{r_is_j} \alpha^m-\alpha^{r}\partial_{r_i} \alpha^{s}\partial_{s_j} \alpha^m-\alpha^{s}\partial_{s_j} \alpha^{r}\partial_{r_i} \alpha^m=0,
\end{gather*}
are satisfied, where the coordinates $r_i$, $s_j$ are in different blocks and $m=1,\dots,n$. The scalar potential $V$ is already in the form of a block-St\"ackel multiplier, thanks to the form of $H$, and satisfies
\begin{gather*}\label{bLC1}
\alpha^{r}\alpha^s\partial_{r_is_j} V-\alpha^r\partial_{r_i} \alpha^s\partial_{s_j} V-\alpha^s\partial_{s_j} \alpha^r\partial_{r_i} V=0.
\end{gather*}
See (\ref{LC}) for a comparison.

\subsection{Invariant characterization}\label{section5.2}

As in St\"ackel theory, we can use the previous results for an invariant characterization of block-separation. Therefore, we have the analogue of the Eisenhart--Kalnins--Miller--Benenti theorem~\cite{Be}.

\begin{prop}\label{p6} The twisted Hamiltonian $H=\alpha^rH_r$, $r=1,\dots,n$, is block-separated in twisted coordinates $\big(q^{r_i}\big)$ if and only if
\begin{enumerate}\itemsep=0pt
\item[$1)$] there exist other $n-1$ independent quadratic in the momenta functions $K_a=\alpha_a^rH_r$ such that
 \begin{gather*}
\{H,K_a\}=0,
 \end{gather*}
\item[$2)$] the Killing two-tensors $(k_a)$ are simultaneously block-diagonalized and have common normally integrable eigenspaces
\end{enumerate}
Moreover, it follows that $\{K_a,K_b\}=0$.
\end{prop}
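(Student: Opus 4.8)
The plan is to reduce the statement to the block-Eisenhart and block-St\"ackel propositions already proved, supplementing them with a Frobenius-type argument that converts the normal integrability of condition~$(2)$ into the existence of the twisted coordinates. The overall structure mirrors the classical Eisenhart--Kalnins--Miller--Benenti argument (see~\cite{Be}), with blocks in place of single coordinates, and the backbone is the chain of equivalences: $H$ block-separated $\iff$ $(S^r_a)$ is a block-St\"ackel matrix $\iff$ the block-Eisenhart equations \eqref{Ee} hold (together with \eqref{V}) $\iff$ $\{H,K_a\}=0$, which is condition~$(1)$.

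First I would treat the forward implication. Assuming $H=\alpha^rH_r$ is block-separated in the twisted coordinates $(q^{r_i})$, the functions $K_a=(S^{-1})^r_aH_r$ are well defined, and Proposition~\ref{p4} already guarantees that $(H,K_a)$ are $n-1$ independent quadratic first integrals in involution with $H$; this yields condition~$(1)$ and, at once, the final assertion $\{K_a,K_b\}=0$. The associated Killing tensors have components $k_a^{r_ir_j}=(S^{-1})^r_ag_r^{r_ir_j}$, which vanish across distinct blocks, so the $k_a$ are simultaneously block-diagonal; their common eigenspaces are the coordinate distributions spanned by $(\partial_{r_1},\dots,\partial_{r_{n_r}})$, tangent to the submanifolds $M_r$ and mutually $G$-orthogonal for the block-diagonal metric, hence normally integrable. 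This is condition~$(2)$.

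For the converse I would start from conditions~$(1)$ and~$(2)$. The normal integrability of the common eigenspaces, via Frobenius, produces coordinates adapted to the block decomposition in which $G$ block-diagonalizes into the twisted form and each $k_a$ restricts to the $r$-th block as multiplication by the scalar $\lambda_a^r=\frac{1}{\alpha^r}(S^{-1})^r_a$. In these coordinates condition~$(1)$, $\{H,K_a\}=0$, is by the block-Eisenhart proposition equivalent to \eqref{Ee} together with \eqref{V}, and by the block-St\"ackel proposition \eqref{Ee} holds precisely when $(S^r_a)$ is a block-St\"ackel matrix. Since the potential is automatically a block-St\"ackel multiplier, $V=\alpha^rV_r$, this is exactly block-separation of $H$, and Proposition~\ref{p4} again returns $\{K_a,K_b\}=0$.

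The delicate step is the converse use of condition~$(2)$: one must check that the common eigenspaces of the $k_a$ are not merely integrable but \emph{normally} integrable, so that both the distributions and their $G$-orthogonal complements integrate to the coordinate foliations and $G$ can be brought into block-diagonal form. This is precisely where ``eigenspaces'' must replace the ``eigenvectors'' of the St\"ackel statement, since the blocks may be multidimensional and the $\lambda_a^r$ need not be pointwise distinct within a block; the eigenspace decomposition, rather than an eigenbasis, is the invariant object that survives. Once the adapted twisted coordinates are in hand, the rest is the bookkeeping already packaged in the two preceding propositions.
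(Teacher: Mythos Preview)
Your proposal is correct and matches the paper's approach: the paper does not spell out a proof of this proposition but presents it as the block analogue of the Eisenhart--Kalnins--Miller--Benenti theorem, to be derived from the preceding block-Eisenhart and block-St\"ackel propositions---precisely your chain of equivalences, with Proposition~\ref{p4} supplying the involution $\{K_a,K_b\}=0$. The one point the paper singles out (in the Remark immediately following the proposition) is that the sufficiency direction relies on the standing hypothesis that each $H_r$ depends only on coordinates of $T^*M_r$, without which conditions~(1) and~(2) are only necessary; you use this implicitly when you note that the potential is automatically a block-St\"ackel multiplier, and it is worth making that dependence explicit, since it is exactly where block-diagonality of $G$ alone would fall short.
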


\begin{rmk} The proof of the previous statement relies on the assumption that each $H_r$ depends only on coordinates in $T^*M_r$. Otherwise, if the metric tensor of $H$ is only block-diagonal in $\big(q^{r_i}\big)$, conditions 1) and 2) are only necessary.
\end{rmk}

Consequently, the strategy for finding block-separated coordinates of a given $N$-dimensional natural Hamiltonian $H$ is the following
\begin{itemize}\itemsep=0pt
\item Find a number $n\leq N$ of independent quadratic first integrals $(K_a)$ of $H$ in involution among themselves, whose associated Killing tensors $(k_a)$ admit common block-diagonalized normally integrable eigenspaces. The number $n$ corresponds to the number of blocks. The dimension of the common eigenspaces equals the dimension of each block.
\item At this point, we have block-diagonalized coordinates and we can write $H=\alpha^r H_r$ for some functions $H_r$, the $\alpha^r$ being determined by the block-St\"ackel matrix determined by~$(k_a)$.
\item The last step consists in checking that each $H_r$ depends only on coordinates in~$T^*M_r$.
\end{itemize}
This is indeed the procedure applied in Example~\ref{4c}.

We recall (see \cite{Eis}) that in a Riemannian manifold any symmetric 2-tensor is pointwise diagonalizable, that is there exist $n$ vector fields $E_i$, pointwise orthogonal eigenvectors of $k$, such that $k=\sum_i \lambda^i E_i\otimes E_i$. The $\lambda^i$ are the roots of the characteristic equation $\det(k-\lambda g)=0$ and the geometric multiplicity of each eigenvalue $\lambda^i$ coincides with its algebraic multiplicity.

In our case, the algebraic multiplicity of each $ \lambda^r_a$ is $n_r$ at least (for some $a$, we can have $\lambda_a^r=\lambda_a^s$, and the algebraic multiplicity is in this case $n_r+n_s$).

Proposition \ref{p6} provides an invariant characterization of block-separable coordinates in terms of what we can call {\it block-Killing--St\"ackel algebras} generated by the $(k_a)$. See \cite{Be, BCRks} for a definition of Killing--St\"ackel algebras.

If we assume that for $N$-dimensional 2-tensors $T^\kappa_\lambda$ to each eigenvalue of algebraic multiplicity~$n_r$ it corresponds a space of $n_r$ linearly independent covariant eigenvectors $\{X_a\}$, we can consider the $(N-n_r)$-dimensional space of vectors $E_{N-n_r}$ such that $\big\langle X_a,E^b\big\rangle =0$, $\forall\, E^b\in E_{N-n_r}$. We assume that $E_{N-n_r}$ is a regular distribution of constant rank~$N-n_r$.

The necessary and sufficient condition for the integrability of the distributions $E_{N-n_r}$ is given by the Haantjes theorem~\cite{Haa}.

\begin{teo}\label{t7}Let $T^\lambda_\kappa$ be a tensor such that to each root with multiplicity $n_r$ of the
characteristic equation belongs a set of $n_r$ linearly independent covariant eigenvectors.
Then the $E_{N-n_r}$ determined by these vectors are integrable if and only if
 \begin{gather*}
H^\kappa_{\nu \sigma}T^\nu_\mu T^\sigma_\lambda-2 H^\sigma_{\nu[\lambda}T^\nu_{\mu]}T^\kappa_\sigma+H^\nu_{\mu\lambda}T^\kappa_\sigma T^\sigma_\nu=0,
 \end{gather*}
where
 \begin{gather*} H^\kappa_{\mu\lambda}=2T^\nu_{[\mu}\partial_{|\nu|}T^\kappa_{\lambda]}-2T^\kappa_\nu \partial_{[\mu}T^\nu_{\lambda]},
 \end{gather*}
is the so-called Haantjes tensor of $T$.
\end{teo}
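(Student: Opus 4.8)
The plan is to reduce the purely tensorial Haantjes condition to a statement about Lie brackets of eigenvectors and then invoke the Frobenius theorem. First I would use the standing hypothesis that every root of the characteristic equation of multiplicity $n_r$ carries a full $n_r$-dimensional space of eigenvectors: this says precisely that $T$ (equivalently its transpose acting on one-forms) is pointwise diagonalizable, so that the tangent space splits as a direct sum $\bigoplus_r E_r$ of eigendistributions $E_r$ of rank $n_r$ with eigenvalue functions $\mu_r$, and the distribution $E_{N-n_r}$ annihilated by the eigen-one-forms of $\mu_r$ is exactly the complementary sum $\bigoplus_{s\neq r}E_s$. Since the Haantjes tensor is bilinear over functions, it vanishes identically if and only if it vanishes on every pair of eigenvectors, so I would test it on pairs $X\in E_a$, $Y\in E_b$.

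The key step is an algebraic collapse. Writing $N_T$ for the Nijenhuis torsion (the object denoted $H^\kappa_{\mu\lambda}$ in the statement) and $\mathcal H_T$ for the full Haantjes combination, I would exploit that $N_T$ is itself tensorial: for $X\in E_a$ and $Y\in E_b$ one has $TX=\mu_aX$ and $TY=\mu_bY$, hence $N_T(TX,Y)=\mu_aN_T(X,Y)$, $N_T(X,TY)=\mu_bN_T(X,Y)$ and $N_T(TX,TY)=\mu_a\mu_bN_T(X,Y)$. Substituting these into the defining combination $\mathcal H_T=T^2N_T-TN_T(T\cdot,\cdot)-TN_T(\cdot,T\cdot)+N_T(T\cdot,T\cdot)$ collapses it to
\begin{gather*}
\mathcal H_T(X,Y)=(T-\mu_a)(T-\mu_b)N_T(X,Y).
\end{gather*}
Because $T$ is diagonalizable, $(T-\mu_a)(T-\mu_b)$ annihilates exactly the $E_a$- and $E_b$-components and is invertible on every other eigenspace; therefore $\mathcal H_T(X,Y)=0$ if and only if $N_T(X,Y)\in E_a\oplus E_b$.

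It then remains to connect $N_T(X,Y)$ with the bracket $[X,Y]$. A direct expansion using $[fX,gY]=fg[X,Y]+f(Xg)Y-g(Yf)X$ together with $TX=\mu_aX$, $TY=\mu_bY$ gives
\begin{gather*}
N_T(X,Y)=(T-\mu_a)(T-\mu_b)[X,Y]+(\mu_a-\mu_b)\big((X\mu_b)Y+(Y\mu_a)X\big),
\end{gather*}
whose last two terms already lie in $E_a\oplus E_b$. Hence, for any eigenspace $E_c$ with $c\neq a,b$, the $E_c$-component of $N_T(X,Y)$ equals $(\mu_c-\mu_a)(\mu_c-\mu_b)$ times the $E_c$-component of $[X,Y]$, and that scalar is nonzero. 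Consequently $N_T(X,Y)\in E_a\oplus E_b$ if and only if $[X,Y]\in E_a\oplus E_b$, so altogether $\mathcal H_T=0$ if and only if $[X,Y]\in E_a\oplus E_b$ for every pair of eigenvectors.

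Finally I would close the loop with Frobenius. Fixing $r$, the distribution $E_{N-n_r}=\bigoplus_{s\neq r}E_s$ is spanned by eigenvectors drawn from the $E_s$ with $s\neq r$, so its involutivity is equivalent to $[X,Y]$ having no $E_r$-component for all such eigenvectors; reading this across all $r$ shows that the whole family $\{E_{N-n_r}\}$ is simultaneously integrable exactly when $[X,Y]\in E_a\oplus E_b$ for every $X\in E_a$, $Y\in E_b$ (the reverse implication is immediate, and the forward one follows because, for any $c\neq a,b$, both $X$ and $Y$ lie in $E_{N-n_c}$, whose integrability forces the $E_c$-component of $[X,Y]$ to vanish). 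Combined with the previous paragraph this yields the stated equivalence. I expect the main obstacle to be bookkeeping rather than conceptual: one must treat the coincident case $a=b$ separately, where the damping factor is $(T-\mu_a)^2$ and the eigenvalue-derivative terms drop out, and one must verify that the tensoriality and the eigenspace splitting persist smoothly even though the eigenvalues are only functions, which is where the constant-rank regularity assumed for the $E_{N-n_r}$ is used.
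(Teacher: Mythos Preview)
Your argument is correct and follows what is essentially the standard route to Haantjes's theorem: reduce to eigenvector pairs by tensoriality, collapse the Haantjes combination to $(T-\mu_a)(T-\mu_b)N_T(X,Y)$, expand $N_T(X,Y)$ on eigenvectors to isolate $(T-\mu_a)(T-\mu_b)[X,Y]$ modulo $E_a\oplus E_b$, and then read off Frobenius involutivity of each $E_{N-n_r}$. The computations you sketch check out, including the $a=b$ case you flag at the end.

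Note, however, that the paper does not supply a proof of this statement at all: Theorem~\ref{t7} is quoted as Haantjes's classical result, with a bare citation to \cite{Haa}, and the authors use it as an off-the-shelf tool to characterize when a characteristic Killing tensor has normally integrable eigendistributions. So there is no ``paper's own proof'' to compare against; what you have written is a self-contained justification of a result the paper merely invokes. If anything, your write-up is closer in spirit to Haantjes's original argument (and its modern retellings via the Fr\"olicher--Nijenhuis calculus) than to anything in the present article.
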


Therefore, in analogy with the characterization of the St\"ackel separable coordinate systems, we have

\begin{prop} A natural Hamiltonian admits block-separable coordinates only if its metric tensor admits a symmetric Killing $2$-tensor $T$ satisfying the Haantjes theorem and its scalar potential $V$ satisfies ${\rm d}(T{\rm d}V)=0$. We call $T$ the characteristic tensor of the block-separable coordinates.
\end{prop}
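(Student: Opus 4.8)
The plan is to prove the stated necessary condition directly, constructing the characteristic tensor~$T$ out of the block-Killing--St\"ackel algebra furnished by Proposition~\ref{p6}. Assuming $H$ is block-separable in twisted coordinates, that proposition gives $n$ simultaneously block-diagonalized symmetric Killing $2$-tensors $k_1=g,k_2,\dots,k_n$, associated with $H=K_1$ and the $K_a$, with common normally integrable eigenspaces; as an operator each $k_a$ acts on the $r$-block as $\lambda_a^r\,\mathrm{Id}$, where $\lambda_a^r=\frac1{\alpha^r}\big(S^{-1}\big)_a^r$. I would then define $T=\mu^a k_a$ with constants $\mu^a$ to be fixed, and show that this single tensor has all the required properties.

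First I would check that $T$ is an admissible characteristic tensor. A constant-coefficient linear combination of symmetric Killing $2$-tensors is again a symmetric Killing $2$-tensor, so $T$ is; and as an operator $T$ acts on the $r$-block as $\Lambda^r\,\mathrm{Id}$ with $\Lambda^r=\mu^a\lambda_a^r$ ($r$ not summed), giving an eigenvalue of multiplicity~$n_r$ with eigenspace $T^*M_r$. Because the $(\lambda_a^r)$ pointwise distinguish distinct blocks (the pointwise independence assumed in Proposition~\ref{p6}), a generic choice of the constants~$\mu^a$ renders the $\Lambda^r$ pairwise distinct off a singular set of measure zero, so that there the eigenspaces of~$T$ coincide exactly with the block distributions.

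Next I would verify the Haantjes condition. The eigenspaces of~$T$ are the block distributions, and their orthogonal complements are precisely the distributions $E_{N-n_r}$ appearing in Theorem~\ref{t7}; these are integrable because the eigenspaces of the $k_a$ are normally integrable by Proposition~\ref{p6}. Theorem~\ref{t7} states that the integrability of all the $E_{N-n_r}$ is equivalent to the vanishing of the Haantjes-tensor expression, so~$T$ satisfies the Haantjes theorem.

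Finally I would establish $\mathrm d(T\,\mathrm dV)=0$ by a block computation parallel to the St\"ackel case of Theorem~\ref{t12}. In block coordinates $(T\,\mathrm dV)_{r_i}=\Lambda^r\partial_{r_i}V$, so the components of $\mathrm d(T\,\mathrm dV)$ are $\partial_{r_i}\big(\Lambda^s\partial_{s_j}V\big)-\partial_{s_j}\big(\Lambda^r\partial_{r_i}V\big)$. For $r=s$ the block-Eisenhart equations~\eqref{Ee} specialise to $\partial_{r_k}\lambda_a^r=0$, hence $\partial_{r_i}\Lambda^r=0$, so the within-block components vanish identically; for $r\neq s$ the components reduce to the block-Bertrand--Darboux equations, which hold since $V=\alpha^rV_r$ is a block-St\"ackel multiplier by the very form of~$H$. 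The main obstacle I anticipate is the genericity step of the second paragraph: one must argue that a single $T=\mu^a k_a$ reproduces the whole block decomposition away from a measure-zero singular set, and match ``normally integrable eigenspaces'' to the exact hypotheses of Theorem~\ref{t7} (one eigenvalue of multiplicity~$n_r$ carrying $n_r$ independent eigenvectors); by comparison the potential identity is a routine consequence of~\eqref{Ee} and the twisted-product form of~$V$.
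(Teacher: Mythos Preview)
Your proposal is correct and follows the route the paper itself indicates: the paper states the proposition without explicit proof, presenting it as a direct consequence of Proposition~\ref{p6} and Theorem~\ref{t7} (``Therefore, in analogy with the characterization of the St\"ackel separable coordinate systems\ldots''). Your construction of $T$ as a generic constant-coefficient combination of the~$k_a$, the Haantjes check via the normal integrability of the block eigenspaces, and the block-Eisenhart/Bertrand--Darboux verification of ${\rm d}(T\,{\rm d}V)=0$ spell out precisely the details the paper leaves implicit.
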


The coordinates are therefore divided into $n$ blocks, where $n$ is the number of the pointwise different eigenvalues of $T$, the dimension of each block equals the multiplicity of the corresponding eigenspace.

An analogue result about characteristic Killing tensors of St\"ackel systems is given in~\cite{AMS} making use of theorems due to Tonolo, Schouten and Nijenhuis (see Section~\ref{section2}). The main difference is due to the fact that, in that case, the eigenvalues of the tensors are simple.

The characterization of block-separation via Killing 2-tensors is extremely powerful in view of applications. For example, in any Riemannian manifold of constant curvature, all Killing tensors, of any order, are linear combinations with constant coefficients of symmetric products of Killing vectors, i.e., isometries~\cite{Ka}. Many common computer-algebra softwares include specific commands for the determination of Killing tensors of Riemannian manifolds.

\begin{exm}\label{4c} \emph{The four-body Calogero system.} The $N$-body Calogero system is the Hamiltonian system of $N$ points of unitary mass on a line, subject to the interaction
 \begin{gather*}
V=\sum_{i=1}^{N-1}\sum _{j=i+1}^N(x_i-x_j)^{-2}.
 \end{gather*}
The Hamiltonian is therefore
 \begin{gather*}
H_{(N)}=\frac 12 \sum _{i=1}^Np_i^2+V,
 \end{gather*}
in Cartesian coordinates $\big(x^i\big)$ and it is known to be maximally superintegrable for any $N$ and multiseparable for $N<4$ \cite{BCRks, SRW}.

In constant curvature manifolds, quadratic first integrals $K_a=\frac 12 k^{ij}_ap_ip_j+W_a$ of natural Hamiltonians can be determined in a systematic way. Indeed, see for example~\cite{BCRks}, $K_a$ is a~first integral of $H$ if and only if the functions~$k^{ij}_a$ are the components of a symmetric Killing 2-tensor~$k_a$ and ${\rm d}W_a=k_a{\rm d}V$. It follows that a necessary condition on~$k_a$ for~$K_a$ to be a first integral of~$H$ is ${\rm d}(k_a{\rm d}V)=0$. As in any constant curvature manifold, the generic Killing 2-tensor of $\mathbb E^4$ is a linear combination (depending on 50 real parameters) of symmetric product of pairs of Killing vectors. By imposing the condition ${\rm d}(k_a{\rm d}V)=0$ to the elements of this space, one finds that $H_{(4)}$ admits, other than the Hamiltonian, only two quadratic independent first integrals in involution, and not the three necessary for standard St\"ackel separation. The two quadratic first integrals of $H_{(4)}$ can be chosen as follows
 \begin{gather*}
K_a=\frac 12 k_a^{ij}p_ip_j+W_a,
 \end{gather*}
where $W_a$ are suitable functions that we will make explicit later on, and
 \begin{gather*}
k_1^{ii}=\sum_{j,k} x^jx^k, \qquad j,k=1,\dots, 4, \qquad j<k,\qquad j,k \neq i,\\
k_1^{rs}=\frac 12\left(\big(x^l\big)^2+\big(x^m\big)^2+x^rx^s- \sum_{j<k} x^jx^k\right),
 \end{gather*}
where $j,k=1,\dots, 4$, $l$, $m$, $r$, $s$ all different,
 \begin{gather*}
k_2^{ii}=\sum _{j\neq i}\big(x^j\big)^2,\qquad
k_2^{ij}=-x^ix^j, \qquad i\neq j.
 \end{gather*}
The eigenvalues of $k_1$ are
 \begin{gather*}
\big\{0,\big(x^1\big)^2+\big(x^2\big)^2+\big(x^3\big)^2+\big(x^4\big)^2\big\}
 \end{gather*}
with multiplicity 1, and
 \begin{gather*}
\sum_{i,j}x^ix^j-\frac 12\big(\big(x^1\big)^2+\big(x^2\big)^2+\big(x^3\big)^2+\big(x^4\big)^2\big), \qquad i,j=1, \dots, 4,\qquad i< j,
 \end{gather*}
with multiplicity 2.

The eigenvalues of $k_2$ are $0$, of multiplicity $1$, and
 \begin{gather*}
\big(x^1\big)^2+\big(x^2\big)^2+\big(x^3\big)^2+\big(x^4\big)^2,
 \end{gather*}
of multiplicity 3. Since the tensors of components $k_1^{ij}$ and $k_2^{ij}$ commute as linear operators and the metric is positive definite, they can be diagonalized simultaneously in some coordinate system.

By using some properties of the eigenvalues of Killing tensors \cite{CR}, one finds that these coordinates $(r,\phi_1,\phi_2, \phi_3)$ are spherical and determined by the consecutive transformations~\cite{CDRrc}
\begin{gather*}
z^1=2^{-1/2}\big(x^1-x^2\big),\\ z^2=6^{-1/2}\big(x^1+x^2-2x^3\big),\\
z^3=12^{-1/2}\big(x^1+x^2+x^3-3x^4\big),\\
z^4=2^{-1}\big(x^1+x^2+x^3+x^4\big),
\end{gather*}
and
\begin{gather*}
z^4=r \cos \phi_1,\\ z^3=r \sin \phi_1 \cos \phi_2,\\ z^2=r \sin \phi_1 \sin \phi_2 \cos \phi_3,\\ z^1=r \sin \phi_1 \sin \phi_2 \sin \phi_3.
\end{gather*}

The scalar potential $V$ of $H_{(4)}$ becomes in these coordinates{\samepage
 \begin{gather*}
V=\frac 1{r^2\sin^2\phi_1} f(\phi_2,\phi_3),
 \end{gather*}
where $f(\phi_2,\phi_3)$ is a rather complicated rational function of trigonometric functions of $\phi_1$, $\phi_2$.}

Therefore, the Hamiltonian becomes
 \begin{gather*}
H_{(4)}=\alpha^1H_1+\alpha^2H_2+\alpha^3H_3,
 \end{gather*}
with
 \begin{gather*}
\alpha^1=1,\qquad \alpha^2=\frac 1{r^2},\qquad \alpha^3=\frac 1{r^2 \sin ^2 \phi_1},\\
H_1=\frac 12 p_r^2+V_1,\qquad H_2=\frac 12 p_{\phi_1}^2+V_2,\qquad H_3=\frac 12 \left(p_{\phi_2}^2+\frac 1{\sin ^2 \phi_2}p_{\phi_3}^2 \right)+V_3,
 \end{gather*}
where
 \begin{gather*}
V=\alpha^iV_i,
 \end{gather*}
with $V_1=0$, $V_2=0$, $V_3=f(\phi_2,\phi_3)$.
Moreover,
 \begin{gather*}
K_1=H_2+\frac{1-2\sin^2 \phi_1}{\sin^2 \phi_1}H_3,\qquad K_2=H_2+\frac 1{\sin^2 \phi_1}H_3.
 \end{gather*}
So that the inverse of the block-St\"ackel matrix is
 \begin{gather*}
S^{-1}=\left(\begin{matrix} 1 && \dfrac 1{r^2} && \dfrac 1{r^2 \sin^2 \phi_1} \vspace{1mm}\\
0 && 1 && \dfrac {1-2\sin ^2 \phi_1}{\sin ^2 \phi_1} \vspace{1mm}\\
0 && 1 && \dfrac 1 {\sin^2 \phi_1} \end{matrix} \right),
 \end{gather*}
and the block-St\"ackel matrix
 \begin{gather*}
S=\left(\begin{matrix} 1 && 0 && -\dfrac 1{r^2} \vspace{1mm}\\
0 && \dfrac 1{2\sin ^2 \phi_1} && \dfrac {2\sin ^2 \phi_1-1}{2\sin ^2 \phi_1} \vspace{1mm}\\
0 && -\dfrac 12 && \dfrac 12 \end{matrix} \right).
 \end{gather*}
 Therefore, the block-separated Hamiltonians are
 \begin{gather*}
\tilde H_1=H_1-c_1+\frac 1{r^2}c_3,\\
\tilde H_2=H_2-\frac 1{2\sin ^2 \phi_1}c_2-\frac {2\sin ^2 \phi_1-1}{2\sin ^2 \phi_1}c_3,\\
\tilde H_3=H_3+\frac 12 c_2-\frac 12 c_3.
 \end{gather*}

The dynamics of the original Hamiltonian $H$ is therefore decomposed into three separated blocks, corresponding to the two dynamics of Hamiltonians $\tilde H_1$, $\tilde H_2$, with one degree of freedom, and the two-degrees of freedom dynamics generated by $\tilde H_3$.
\end{exm}

\begin{exm} \emph{Killing tensor with an eigenvalue of multiplicity $N-1$.} If $H$ admits a single quadratic first integral, this one determines block-separable coordinates if it has exactly one eigenvalue of multiplicity one and another one of multiplicity $N-1$, so that we have a $2\times 2$ St\"ackel matrix. Indeed, from block-Eisenhart equations we have
 \begin{gather*}
X_1\lambda_1=0,\qquad X_{2_i}\lambda_2=0, \qquad i=1,\dots,N-1,
 \end{gather*}
where $X_1$ is the eigenvector corresponding to the eigenvalue $\lambda_1$ of multiplicity one and $X_{2_i}$ the eigenvectors of~$\lambda_2$ of multiplicity $N-1$. Hence, provided~$\lambda_1$ is not a constant, we have that the submanifolds $\lambda_1={\rm const}$ are orthogonal to the eigenvector~$X_1$, which is therefore normally integrable. We can put in this case $X_1=\partial_1$ and the block separation is essentially determined by the equations
 \begin{gather*}
\lambda_1\big(q^2, \dots,q^N\big)={\rm const},
 \end{gather*}
moreover $\lambda_2(q^1)$.

We find in this way another (partial) analogy with St\"ackel separation, since in that case, the existence of a single Killing 2-tensor with distinct eigenvalues in dimension two is enough to determine St\"ackel separable coordinates and the eigenvalues themselves, if not constants, generate the separable coordinates.

In \cite{CR} we show that St\"ackel coordinates can be completely determined by the eigenvalues of the associated Killing two-tensors. Part of those results can be easily extended to block-separable systems. However, we leave the analysis of these questions for future researches.
\end{exm}

\section[Block-separable coordinates of $\mathbb E^3$]{Block-separable coordinates of $\boldsymbol{\mathbb E^3}$}\label{section6}

In dimension three, only two types of block-separable coordinates can exist, plus the trivial case of a single three-dimensional block. So, or each block is one-dimensional, and the coordinates are standard separable orthogonal coordinates, or one block is one-dimensional and the other one is two-dimensional. In this last case, by denoting the separable coordinates as $(u,v,w)$, the geodesic Hamiltonian is
 \begin{gather*}
H=\alpha^1 H_1+\alpha^2 H_2,
 \end{gather*}
with
 \begin{gather*}
H_1=g_1(u)p_u^2,
 \end{gather*}
and, since any 2-dimensional Riemannian manifold is locally conformally flat,
 \begin{gather*}
H_2=g_2(v,w)\big(p_v^2+p_w^2\big),
 \end{gather*}
where the choice of local Cartesian coordinates on the manifolds $u={\rm const}$ is not restrictive. Since $g_1$ can always be set equal to 1 by a rescaling of $u$, we can assume $g_1=1$ and call $g_2$ simply $g$.

The corresponding general block-St\"ackel matrix has the form
\begin{gather*}
S=\left( \begin{matrix} S_1^1(u) && S_1^2(u) \\ S_2^1(v,w) && S_2^2(v,w) \end{matrix} \right).
\end{gather*}
Since
 \begin{gather*}\alpha^1=\big(S^{-1}\big)_1^1,\qquad \alpha^2=\big(S^{-1}\big)_1^2, \end{gather*}
the contravariant metric $G$ associated with $H$ has components
 \begin{gather*}
G^{uu}=\frac{S_2^2(v,w)}\Delta, \qquad G^{vv}=G^{ww}=\frac{-S_1^2(u)g}\Delta,
 \end{gather*}
with $\Delta=S_1^1(u)S_2^2(v,w)-S_1^2(u)S_2^1(v,w)$.

We restrict ourselves to the space $\mathbb E^3$ by imposing that the Riemann tensor of the metric $G$ is identically zero. We have
 \begin{gather*}
R_{2132}=-\frac{3 \big(\partial_uS_1^2 S_1^1-\partial_uS_1^1 S_1^2\big)\big(\partial_w S_2^2 S_2^1-\partial_w S_2^2 S_2^2\big)}{4\Delta S_1^2 g},
 \end{gather*}
and
 \begin{gather*}
R_{3123}=-\frac{3 \big(\partial_uS_1^2 S_1^1-\partial_uS_1^1 S_1^2\big)\big(\partial_v S_2^2 S_2^1-\partial_v S_2^2 S_2^2\big)}{4\Delta S_1^2 g}.
 \end{gather*}
Hence, two cases are possible
\begin{enumerate}\itemsep=0pt
\item[i)] $\big(\partial_uS_1^2 S_1^1-\partial_uS_1^1 S_1^2\big)=0$,

\item[ii)] $\big(\partial_w S_2^2 S_2^1-\partial_w S_2^2 S_2^2\big)=0$ and $(\partial_v S_2^2 S_2^1-\partial_v S_2^2 S_2^2)=0$.
\end{enumerate}

\subsection*{Case i}

We have
 \begin{gather*}S_1^1=a S_1^2, \end{gather*}
where $a$ is a constant. We observe that the components of $G$ become
 \begin{gather*}
G^{uu}=\frac{S_2^2}{S_1^2\big(aS_2^2-S_2^1\big)},\qquad G^{vv}=G^{ww}=-\frac g{aS_2^2-S_2^1},
 \end{gather*}
that is, we can write without restrictions
 \begin{gather*}
G^{uu}=h(u)^{-2} f(v,w)^{-2}, \qquad G^{vv}=G^{ww}=l(v,w)^{-2},
 \end{gather*}
with the obvious definitions of $h$, $f$ and $l$.

A further rescaling of $u$ allows to set $h=1$ and $u$ can always be considered as an ignorable coordinate, therefore, associated with a Killing vector $\partial_u$.

The unknown functions are now reduced to two, and we can consider the remaining components of the Riemann tensor. The resulting equations are
\begin{gather}
l\left(\partial_{vv}l+\partial_{ww}l \right)-(\partial_vl)^2-(\partial_w l)^2=0,\nonumber\\
l\partial_{vv}f-\partial_v l \partial_v f+\partial_w l \partial_wf=0,\nonumber \\
l\partial_{ww}f+\partial_v l \partial_v f-\partial_w l \partial_wf=0,\nonumber\\
l\partial_{vw}f-\partial_v l \partial_w f-\partial_w l \partial_vf=0.\label{16}
\end{gather}

\begin{prop} The coordinate leaves $u={\rm const}$ are planes and $\partial_u$ is proportional to a Killing vector.
\end{prop}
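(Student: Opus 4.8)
The plan is to read the geometry directly off the reduced metric. After the Case~i simplifications the line element is
\begin{gather*}
{\rm d}s^2=f(v,w)^2\,{\rm d}u^2+l(v,w)^2\big({\rm d}v^2+{\rm d}w^2\big),
\end{gather*}
so $G$ is block-diagonal and independent of $u$; hence $\partial_u$ is a Killing vector, orthogonal to the leaves $u={\rm const}$, which carry the intrinsic metric $l^2({\rm d}v^2+{\rm d}w^2)$. My strategy is to prove that each leaf is \emph{totally geodesic} in $\mathbb{E}^3$, and then to invoke the classical fact that a connected surface of $\mathbb{E}^3$ with identically vanishing second fundamental form lies in an affine plane. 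This yields simultaneously that the leaves are planes and that $\partial_u$ is the Killing field generating the isometry that permutes them.

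First I would isolate the only Christoffel symbols entering the second fundamental form of a leaf. With unit normal $N=\frac1f\partial_u$ and tangent frame $\{\partial_v,\partial_w\}$ one has $\mathrm{II}_{ij}=g(\nabla_{\partial_i}\partial_j,N)=f\,\Gamma^u_{ij}$ for $i,j\in\{v,w\}$. Because $G$ is block-diagonal ($g_{uv}=g_{uw}=0$) and $u$-independent ($\partial_u g_{ij}=0$), the Koszul formula collapses to
\begin{gather*}
\Gamma^u_{ij}=\frac12\,g^{uu}\big(\partial_i g_{ju}+\partial_j g_{iu}-\partial_u g_{ij}\big)=0,
\end{gather*}
so $\mathrm{II}\equiv0$ and every leaf is totally geodesic. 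Since we have imposed that $G$ is the flat metric of $\mathbb{E}^3$, the surface theorem then forces each leaf into a plane. I would record the consistency check provided by the first equation of the system: $l(\partial_{vv}l+\partial_{ww}l)-(\partial_v l)^2-(\partial_w l)^2=0$ is exactly $\Delta\ln l=0$, i.e.\ the Gaussian curvature of the leaf vanishes, precisely as the Gauss equation predicts for a totally geodesic surface in flat space.

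The statement on $\partial_u$ follows from the same structure. As a hypersurface-orthogonal Killing field of $\mathbb{E}^3$, writing the general Euclidean generator as $\xi=a+b\times r$ one computes $\nabla\times\xi=2b$ and hence $\xi\cdot(\nabla\times\xi)=2\,a\cdot b$; orthogonality to an integrable distribution forces $a\cdot b=0$, leaving only pure translations ($b=0$) or pure rotations ($a\perp b$, an axis displaced from the origin). In both cases the orthogonal surfaces are planes, and $\partial_u$ is proportional to a translational or rotational Killing generator, as claimed.

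The obstacle to flag is conceptual rather than computational: intrinsic flatness of the leaves---the content of the first equation by itself---does \emph{not} suffice, since a cylinder is intrinsically flat yet not a plane. The decisive input is the vanishing of the \emph{extrinsic} curvature $\mathrm{II}$, which the $u$-independent block-diagonal form of $G$ supplies through $\Gamma^u_{ij}=0$; the remaining equations~\eqref{16} for $f$ are then needed only to certify that the ambient space is genuinely $\mathbb{E}^3$, so that ``totally geodesic'' may be upgraded to ``planar''.
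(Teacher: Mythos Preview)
Your argument is correct and in fact more careful than the paper's own one-line proof. The paper merely observes that the first of the four equations says the intrinsic Riemann tensor of each leaf $u={\rm const}$ vanishes, and declares the leaves planes on that basis. You take a genuinely different route: rather than appealing to intrinsic flatness, you compute the second fundamental form and find it identically zero because $g_{uv}=g_{uw}=0$ and $\partial_u g_{ij}=0$ force $\Gamma^u_{ij}=0$ for tangential $i,j$; total geodesicity in $\mathbb{E}^3$ then gives planarity outright. This is the sharper argument, and you are right to flag that intrinsic flatness by itself would not suffice (the cylinder). What the paper's approach buys is brevity---the authors have already recorded, just before the proposition, that $u$ is ignorable and $\partial_u$ is Killing, and presumably regard the intrinsic-curvature equation as a confirmation rather than the full justification; but they do not spell out the extrinsic step, which you do. Your supplementary classification of hypersurface-orthogonal Killing fields of $\mathbb{E}^3$ via the Frobenius condition $a\cdot b=0$ is a pleasant independent check, though once $\mathrm{II}\equiv 0$ is established it is not needed for the conclusion.
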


\begin{proof}Equation (\ref{16}) means that the Riemann tensor of the coordinate leaves $u={\rm const}$ is zero.
\end{proof}

The block-St\"ackel matrix is in this case
\begin{eqnarray*}
S=\left( \begin{matrix} a && 1 \\ \left(1-\dfrac a{f^2}\right)gl^2 && -\dfrac{gl^2}{f^2} \end{matrix}\right),
\end{eqnarray*}
and the Hamiltonian $H=\alpha_1^1H_1+\alpha_1^2H_2$ and the first integral $K=\alpha_2^1H_1+\alpha_2^2H_2$ are
 \begin{gather*}
H=\frac 1{f^2}p_u^2+\frac 1{l^2}\big(p_v^2+p_w^2\big), \qquad K=\left(1-\frac a{f^2}\right)p_u^2-\frac a{l^2}\big(p_v^2+p_w^2\big).
 \end{gather*}
Since $K=p_u^2-a H$, we have that the first integral associated with this kind of block-separation is simply $p_u^2$. An expected result, because the variable $u$ is ignorable.

 We observe that the solution of the equation (\ref{16}) is
 \begin{gather}\label{32}
 l=c_2 {\rm e}^{\frac {c_0}2 (w^2-v^2)-c_1v-c_3w},
 \end{gather}
 where $(c_i)$ are real constants. The substitution of (\ref{32}) in the remaining equations yields
 \begin{gather}
 \partial_{vv}f+(c_1-c_0v) \partial_v f-(c_3+c_0w) \partial_wf=0,\nonumber\\
 \partial_{ww}f-(c_1-c_0v) \partial_v f+(c_3+c_0w) \partial_wf=0,\nonumber\\
 \partial_{vw}f+(c_3+c_0w) \partial_v f+(c_1-c_0v) \partial_wf=0.\label{eqq}
 \end{gather}
 The sum of the first two equations implies that $f$ is a solution of the Laplace equation. Then,
 \begin{gather*}
 f=f_1(v+{\rm i}w)+f_2(v-{\rm i}w).
 \end{gather*}
 The substitution of this equation into (\ref{eqq}) gives, after few manipulations, the equivalent system
 \begin{gather*}
 f_1''+(c_1-{\rm i}c_3-c_0(v+{\rm i}w))f_1' = 0,\\
 f_2''+(c_1+{\rm i}c_3-c_0(v-{\rm i}w))f_2' = 0.
 \end{gather*}

\begin{exm} \emph{Rotational and cylindrical coordinates.} Given in the Euclidean plane any coordinate system with a symmetry axis, the coordinates of~$\mathbb E^3$ obtained by rotating the plane around the symmetry axis and by taking as third coordinate the angle of rotation, are of this form. If $f$ is constant and $l^{-2}$ represents a metric in the Euclidean plane $(v,w)$, then the cylindrical coordinate system with the plane $(v,w)$ as base is of this form.
\end{exm}

\subsection*{Case ii}

A similar analysis for Case ii gives the equivalent condition
 \begin{gather*}S_2^1=aS_2^2. \end{gather*}
We have
 \begin{gather*}
G^{uu}=\frac{1}{S_1^1-aS_1^2},\qquad G^{vv}=G^{ww}=-\frac {S_1^2g}{S_2^2\big(S_1^1-aS_1^2\big)},
 \end{gather*}
and we can write without restrictions
 \begin{gather*}
G^{uu}=h(u)^2, \qquad G^{vv}=G^{ww}=l(u)^2 f(v,w)^2,
 \end{gather*}
with the obvious definitions of $h$, $f$ and $l$.

Again, a rescaling of $u$ allows to set $h=1$, in this case, $u$ is not necessarily ignorable, but the metric has the structure of a warped metric. Therefore, $\partial_u$ must be parallel to a conformal Killing vector \cite{CDRg}.

The equations arising from the imposition that the Riemann tensor is equal to zero are now
\begin{gather}
l''l-2(l')^2 = 0,\nonumber\\
l^4\big(f(\partial_{vv}f+\partial_{ww}f)-(\partial_vf)^2-(\partial_wf)^2\big)-(l')^2 = 0.\label{def}
\end{gather}

The block-St\"ackel matrix is in this case
\begin{gather*}
S=\left( \begin{matrix} 1-al^2 && -l^2 \\ a \dfrac g{f^2} && \dfrac g{f^2} \end{matrix}\right),
\end{gather*}
and the Hamiltonian $H=\alpha_1^1H_1+\alpha_1^2H_2$ and first integral $K=\alpha_2^1H_1+\alpha_2^2H_2$ are
 \begin{gather*}
H=p_u^2+l^2f^2\big(p_v^2+p_w^2\big),\qquad K=-ap_u^2+ \big(1-al^2\big)f^2\big(p_v^2+p_w^2\big).
 \end{gather*}
Since $K=f^2\big(p_v^2+p_w^2\big)-a H$, we have that the first integral associated with this kind of block-separation is
 \begin{gather*}f^2\big(p_v^2+p_w^2\big), \end{gather*}
as expected, due to the warped form of the metric.

The solutions of (\ref{def}) are
\begin{eqnarray*}
l=-(c_1u+c_2)^{-1},
\end{eqnarray*}
that, substituted in the second equation, give
\begin{gather}\label{curv}
f(\partial_{uu}f+\partial_{ww}f)-(\partial_vf)^2-(\partial_wf)^2-c_1^2=0.
\end{gather}

We remark that (\ref{curv}) means that the Ricci scalar of the submanifolds $u={\rm const}$ is
 \begin{gather*}
R=2l^2(u)c_1^2,
 \end{gather*}
and its Riemann tensor has non null components
 \begin{gather*}
R_{yzyz}=\frac {c_1^2}{f^2}.
 \end{gather*}
Therefore,

\begin{prop} In Case ii, all the coordinate leaves orthogonal to $\partial_u$ are planes $(c_1=0)$ or spheres. The vector $\partial_u$ is proportional to a conformal Killing vector.
\end{prop}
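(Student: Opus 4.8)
The plan is to read off both assertions directly from the curvature computation preceding the statement, using the classification of two-dimensional constant-curvature metrics for the geometry of the leaves and the warped structure of $G$ for the conformal Killing property.

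First I would analyse the coordinate leaves $u={\rm const}$. On such a leaf the warping function $l=l(u)$ is constant, so the induced metric $G_{vv}=G_{ww}=l^{-2}f^{-2}$ is a two-dimensional conformally flat metric, and the only independent curvature scalar of a surface is the Gaussian curvature $K=R/2$. Writing the leaf metric as $e^{2\psi}(dv^2+dw^2)$ with $e^{\psi}=(lf)^{-1}$ and applying the standard formula $K=-e^{-2\psi}(\partial_{vv}\psi+\partial_{ww}\psi)$, a short computation gives $K=l^2\big(f(\partial_{vv}f+\partial_{ww}f)-(\partial_vf)^2-(\partial_wf)^2\big)$. Equation~(\ref{curv}) identifies the bracket as the constant $c_1^2$, so $K=l^2c_1^2$, in agreement with the value $R=2l^2c_1^2$ recorded in the remark. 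Since $l$ and $c_1$ are constant along the leaf, $K$ is constant, so each leaf is a surface of constant Gaussian curvature.

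Next I would invoke the local classification of constant-curvature surfaces. Because $G$ is the positive-definite metric of $\mathbb E^3$, the induced metric on each leaf is Riemannian and $K=l^2c_1^2\ge 0$; hence only the flat and positively curved cases occur. If $c_1=0$ then $K=0$ and the leaf is (locally) a plane, while if $c_1\neq 0$ then $K>0$ and the leaf is (locally) a round sphere. The nonvanishing Riemann component $R_{yzyz}=c_1^2/f^2$ from the remark confirms that curvature is present precisely when $c_1\neq 0$.

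For the conformal Killing assertion I would use the warped structure already singled out in Case~ii: in the chosen coordinates $G=du^2+l(u)^{-2}f^{-2}(dv^2+dw^2)$, the transverse part $f^{-2}(dv^2+dw^2)$ does not depend on $u$ and all $u$-dependence resides in the warping factor $l(u)^{-2}$. As noted there and following~\cite{CDRg}, such a warped form forces $\partial_u$ to be parallel to a conformal Killing vector; concretely one checks that $\xi=l^{-1}\partial_u$ satisfies $\mathcal L_\xi G=\lambda G$ with $\lambda=-2l'/l^2$ on the $uu$- and $vv$-components, so $\partial_u=l\,\xi$ is proportional to the conformal Killing vector $\xi$. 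The only delicate point is that the curvature computation yields only constant, nonnegative curvature, so the identifications ``plane'' and ``sphere'' are local, up to isometry, and it is exactly the positive-definiteness of $G$ that excludes the hyperbolic alternative that constant curvature would otherwise allow.
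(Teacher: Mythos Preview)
Your argument is correct and follows essentially the same route as the paper: the proposition there is stated immediately after the remark computing $R=2l^2c_1^2$ for the leaves $u={\rm const}$, so the intended proof is exactly the observation that the leaf curvature is the nonnegative constant $l^2c_1^2$, together with the earlier remark that the warped form of $G$ forces $\partial_u$ to be parallel to a conformal Killing vector. You simply make both steps more explicit, supplying the Gaussian-curvature computation via $K=-e^{-2\psi}\Delta\psi$ and exhibiting the concrete conformal Killing field $\xi=l^{-1}\partial_u$ with $\mathcal L_\xi G=(-2l'/l^2)G$, which the paper only asserts by reference to~\cite{CDRg}; your closing caveat that ``plane'' and ``sphere'' are to be read locally, up to isometry, is a fair and useful clarification of the paper's slightly informal language.
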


\begin{exm}{\it Spherical and cylindrical coordinates.} An example of Type~ii of block separation is given by spherical-type coordinates, where $u$ is the radius of the spheres and $(v,w)$ are any coordinates on the sphere. If $c_1=0$, $c_2\neq 0$, then the coordinates are cylindrical with the plane $(v,w)$ as base.
\end{exm}

\begin{rmk}From the previous remarks it follows that, if a coordinate system is block-separable in $n>m$ blocks, then, not necessarily it is block separable in $m$ blocks too. Indeed, if we consider the ellipsoidal coordinates in $\mathbb E^3$, they are St\"ackel-separable, then block-separable in three blocks, but they cannot be block-separable in 2 blocks, since these coordinates do not include planes or spheres. This fact is not surprising, since it is known that St\"ackel separation in ellipsoidal coordinates cannot be achieved by successive separation of the single variables, and block-separation into two blocks in dimension three means exactly that one of the variables can be separated from the others.
\end{rmk}

 The classification of 4D block-separable coordinates, even in Euclidean spaces, appears to be much less simple than in 3D spaces. We must leave the classification of block-separable coordinates in 4D Euclidean spaces for other works. Partial separation of Hamilton--Jacobi and Helmholtz equations on 4D Riemannian manifolds is briefly considered in \cite{BKM}.

\section{Conclusions and future directions}\label{section7}

By introducing the idea of twisted product of natural Hamiltonians and the analysis of the consequent relations among the Hamiltonian flows of the product and of its factors, we provide a new, dynamical, interpretation of classical partial separation of variables of Hamilton--Jacobi equation, as well as of complete separation. We find that our block-separation, when possible, allows the reconstruction of the orbits of the product Hamiltonian from the orbits of the several lower-dimensional block-separated Hamiltonians. We characterize block-separation in an invariant way, by adapting classical results of complete separation theory for the Hamilton--Jacobi equation.
A deeper characterisation of the twisted form of the Hamiltonian is in progress.
Furthermore, we extend Eisenhart's classification of completely separable coordinate systems, the St\"ackel systems, in $\mathbb E^3$ to block-separable coordinate systems, finding essentially coordinate-blocks of rotational, cylindrical, and spherical type. We are confident that the possibility of reducing the analysis of the dynamics of Hamiltonians with many degrees of freedom to the dynamics of its lower-dimensional, block-separated Hamiltonians can find many applications, even in the field of numerical computations. We do not consider here the block-separation of Schr\"odinger's and other related equations of mathematical physics. Studies on partial separation of these equations are somehow more developed than those on partial separation of Hamilton--Jacobi equation, probably because, for these equations, it is less relevant the absence of completeness in partial separation and the consequent impossibility of application of the Jacobi's canonical transformation. We will study block-separation of Schr\"odinger and related equations in next papers.

\subsection*{Acnowledgements}
We are grateful to Raymond McLenaghan, Krishan Rajaratnam and Piergiulio Tempesta for conversations useful to clarify some points of our research and to the anonymous referees which gave a relevant contribution to the improvement of this article.

\pdfbookmark[1]{References}{ref}
\LastPageEnding


\begin{thebibliography}{99}
\footnotesize\itemsep=0pt

\bibitem{Be}
Benenti S., Intrinsic characterization of the variable separation in the
 {H}amilton--{J}acobi equation, \href{https://doi.org/10.1063/1.532226}{\textit{J.~Math. Phys.}} \textbf{38} (1997),
 6578--6602.

\bibitem{BCRks}
Benenti S., Chanu C., Rastelli G., The super-separability of the three-body
 inverse-square {C}alogero system, \href{https://doi.org/10.1063/1.533369}{\textit{J.~Math. Phys.}} \textbf{41} (2000),
 4654--4678.

\bibitem{BF}
Bolsinov A.V., Fomenko A.T., Integrable {H}amiltonian systems. Geometry, topology, classification, \href{https://doi.org/10.1201/9780203643426}{Chapman \&
 Hall/CRC}, Boca Raton, FL, 2004.

\bibitem{BKM}
Boyer C.P., Kalnins E.G., Miller Jr. W., Separable coordinates for
 four-dimensional {R}iemannian spaces, \href{https://doi.org/10.1007/BF01611508}{\textit{Comm. Math. Phys.}} \textbf{59}
 (1978), 285--302.

\bibitem{BCM}
Broadbridge P., Chanu C.M., Miller Jr. W., Solutions of {H}elmholtz and
 {S}chr\"{o}dinger equations with side condition and nonregular separation of
 variables, \href{https://doi.org/10.3842/SIGMA.2012.089}{\textit{SIGMA}} \textbf{8} (2012), 089, 31~pages,
 \href{https://arxiv.org/abs/1209.2019}{arXiv:1209.2019}.

\bibitem{C}
Chanu C., Geometry of non-regular separation, in Symmetries and Overdetermined
 Systems of Partial Differential Equations, \textit{IMA Vol. Math. Appl.},
 Vol.~144, \href{https://doi.org/10.1007/978-0-387-73831-4_14}{Springer}, New York, 2008, 305--317.

\bibitem{CDRrc}
Chanu C., Degiovanni L., Rastelli G., Three and four-body systems in one
 dimension: integrability, superintegrability and discrete symmetries,
 \href{https://doi.org/10.1134/S1560354711050066}{\textit{Regul. Chaotic Dyn.}} \textbf{16} (2011), 496--503,
 \href{https://arxiv.org/abs/1309.0089}{arXiv:1309.0089}.

\bibitem{CDRg}
Chanu C., Degiovanni L., Rastelli G., Generalizations of a method for
 constructing first integrals of a class of natural Hamiltonians and some
 remarks about quantization, \href{https://doi.org/10.1088/1742-6596/343/1/012101}{\textit{J.~Phys. Conf. Ser.}} \textbf{343} (2012),
 012101, 15~pages, \href{https://arxiv.org/abs/1111.0030}{arXiv:1111.0030}.

\bibitem{CR}
Chanu C., Rastelli G., Eigenvalues of {K}illing tensors and separable webs on
 {R}iemannian and pseudo-{R}iemannian manifolds, \href{https://doi.org/10.3842/SIGMA.2007.021}{\textit{SIGMA}} \textbf{3}
 (2007), 021, 21~pages, \href{https://arxiv.org/abs/nlin.SI/0612042}{arXiv:nlin.SI/0612042}.

\bibitem{AMS}
Cochran C.M., McLenaghan R.G., Smirnov R.G., Equivalence problem for the
 orthogonal webs on the 3-sphere, \href{https://doi.org/10.1063/1.3578773}{\textit{J.~Math. Phys.}} \textbf{52} (2011),
 053509, 22~pages, \href{https://arxiv.org/abs/1009.4244}{arXiv:1009.4244}.

\bibitem{DR}
Degiovanni L., Rastelli G., Complex variables for separation of the
 {H}amilton--{J}acobi equation on real pseudo-{R}iemannian manifolds,
 \href{https://doi.org/10.1063/1.2747611}{\textit{J.~Math. Phys.}} \textbf{48} (2007), 073519, 23~pages,
 \href{https://arxiv.org/abs/nlin.SI/0612051}{arXiv:nlin.SI/0612051}.

\bibitem{dP}
Di~Pirro G.A., Sugli integrali primi quadratici delle equazioni della
 meccanica, \href{https://doi.org/10.1007/BF02419532}{\textit{Ann. Mat. Pura Appl.}} \textbf{24} (1896), 315--334.

\bibitem{Ei}
Eisenhart L.P., Separable systems of {S}t\"ackel, \href{https://doi.org/10.2307/1968433}{\textit{Ann. of Math.}}
 \textbf{35} (1934), 284--305.

\bibitem{Ei1}
Eisenhart L.P., St\"ackel systems in conformal {E}uclidean space, \href{https://doi.org/10.2307/1968664}{\textit{Ann.
 of Math.}} \textbf{36} (1935), 57--70.

\bibitem{Eis}
Eisenhart L.P., Riemannian geometry, 8th ed., \textit{Princeton Landmarks in
 Mathematics}, Princeton University Press, Princeton, NJ, 1997.

\bibitem{Haa}
Haantjes J., On {$X_m$}-forming sets of eigenvectors, \href{https://doi.org/10.1016/S1385-7258(55)50021-7}{\textit{Indag. Math.}}
 \textbf{58} (1955), 158--162.

\bibitem{Ha}
Havas P., Separation of variables in the {H}amilton--{J}acobi,
 {S}chr\"{o}dinger, and related equations. {II}.~{P}artial separation,
 \href{https://doi.org/10.1063/1.522489}{\textit{J.~Math.~Phys.}} \textbf{16} (1975), 2476--2489.

\bibitem{Ka}
Kalnins E.G., Separation of variables for {R}iemannian spaces of constant
 curvature, \textit{Pitman Monographs and Surveys in Pure and Applied
 Mathematics}, Vol.~28, Longman Scientific \& Technical, Harlow, John Wiley \&
 Sons, Inc., New York, 1986.

\bibitem{18bis}
Kalnins E.G., Kress J.M., Miller Jr. W., Separation of variables and superintegrability: the symmetry of solvable systems,
IOP Publishing, Bristol, 2018.

\bibitem{Le}
Lehmann-Filh\'es H.R., \"Uber die Verwendung unvollst\"andiger Integrale der
 Hamilton--Jacobischer partiellen Differentialgleichung, \href{https://doi.org/10.1002/asna.19041651402}{\textit{Astron.
 Nachr.}} \textbf{165} (1904), 209--216.

\bibitem{LC}
Levi-Civita T., Sulla integrazione della equazione di {H}amilton--{J}acobi per
 separazione di variabili, \href{https://doi.org/10.1007/BF01445149}{\textit{Math. Ann.}} \textbf{59} (1904), 383--397.

\bibitem{Ma}
Marcus R.A., Separation of sets of variables in quantum mechanics,
 \href{https://doi.org/10.1063/1.1725934}{\textit{J.~Chem. Phys.}} \textbf{41} (1964), 603--609.

\bibitem{MRS99}
Meumertzheim M., Reckziegel H., Schaaf M., Decomposition of twisted and warped
 product nets, \href{https://doi.org/10.1007/BF03322118}{\textit{Results Math.}} \textbf{36} (1999), 297--312.

\bibitem{MS}
Moon P., Spencer D.E., Field theory handbook. Including coordinate systems,
 differential equations and their solutions, 2nd~ed., \href{https://doi.org/10.1007/978-3-642-83243-7}{Springer-Verlag}, Berlin,
 1988.

\bibitem{Ni}
Ni A.V., Pak C.W., Finding analytic solutions on active arcs of the optimal
 trajectory in a gravitational field and their applications, \href{https://doi.org/10.1134/s0005117917020102}{\textit{Autom.
 Remote Control}} \textbf{78} (2017), 313--331.

\bibitem{Nij}
Nijenhuis A., {$X_{n-1}$}-forming sets of eigenvectors, \href{https://doi.org/10.1016/S1385-7258(51)50028-8}{\textit{Indag. Math.}}
 \textbf{54} (1951), 200--212.

\bibitem{Sc}
Schouten J.A., \"Uber {D}ifferentialkomitanten zweier kontravarianter
 {G}r\"{o}ssen, \textit{Proc.~K. Ned. Acad. Amsterdam} \textbf{43} (1940),
 449--452.

\bibitem{St}
St\"ackel P., Sur une classe de probl\`emes de dynamique, \textit{C.~R.~Acad.
 Sci. Paris} \textbf{116} (1893), 485--487.

\bibitem{St97}
St\"ackel P., Ueber quadratische Integrale der Differentialgleichungen der
 Dynamik, \href{https://doi.org/10.1007/BF02580501}{\textit{Ann. Mat. Pura Appl.}} \textbf{25} (1897), 55--60.

\bibitem{To}
Tonolo A., Sulle variet\`a {R}iemanniane normali a tre dimensioni,
 \textit{Pont. Acad. Sci. Acta} \textbf{13} (1949), 29--53.

\bibitem{SRW}
Wojciechowski S., Superintegrability of the {C}alogero--{M}oser system,
 \href{https://doi.org/10.1016/0375-9601(83)90018-X}{\textit{Phys. Lett.~A}} \textbf{95} (1983), 279--281.

\end{thebibliography}
\end{document}